\newcolumntype{V}{!{\vrule width 1pt}}
\newtheorem{theorem}{Theorem}
\newtheorem{lemma}[theorem]{Lemma}
\newtheorem{proposition}[theorem]{Proposition}
\newtheorem{definition}[theorem]{Definition}
\newtheorem{example}[theorem]{Example}
\newtheorem{remark}[theorem]{Remark}
\newtheorem{problem}[theorem]{Problem}
\newcommand{\gr}{Gr\"{o}bner }
\def\z{{\bf z}}
\def\F{{\mathbf{F}}}
\def\G{{\mathbf{G}}}
\begin{document}

\begin{frontmatter}

\title{On Factor Left Prime Factorization Problems for Multivariate Polynomial Matrices}

\author[baic,smbh]{Dong Lu}
\ead{donglu@buaa.edu.cn}

\author[klmm,ucas]{Dingkang Wang}
\ead{dwang@mmrc.iss.ac.cn}

\author[klmm,ucas]{Fanghui Xiao\corref{cor1}}
\ead{xiaofanghui@amss.ac.cn}

\cortext[cor1]{Corresponding author}

\address[baic]{Beijing Advanced Innovation Center for Big Data and Brain Computing, Beihang University, Beijing 100191, China}

\address[smbh]{School of Mathematical Sciences, Beihang University, Beijing 100191, China}

\address[klmm]{KLMM, Academy of Mathematics and Systems Science, Chinese Academy of Sciences, Beijing 100190, China}

\address[ucas]{School of Mathematical Sciences, University of Chinese Academy of Sciences, Beijing 100049, China}

\begin{abstract}
 This paper is concerned with factor left prime factorization problems for multivariate polynomial matrices without full row rank. We propose a necessary and sufficient condition for the existence of factor left prime factorizations of a class of multivariate polynomial matrices, and then design an algorithm to compute all factor left prime factorizations if they exist. We implement the algorithm on the computer algebra system Maple, and two examples are given to illustrate the effectiveness of the algorithm. The results presented in this paper are also true for the existence of factor right prime factorizations of multivariate polynomial matrices without full column rank.
\end{abstract}

\begin{keyword}
 Multivariate polynomial matrices, Matrix factorization, Factor left prime (FLP), Column reduced minors, Free modules
\end{keyword}
\end{frontmatter}

\section{Introduction}\label{intro}

 The factorization problems of multivariate polynomial matrices have attracted much attention over the past decades because of their fundamental importance in multidimensional systems, circuits, signal processing, controls, and other related areas \citep{Bose1982,Bose2003}. Up to now, the factorization problems have been solved for univariate and bivariate polynomial matrices \citep{Guiver1982Polynomial,Morf1977New}. However, there are still many challenging open problems for multivariate (more than two variables) polynomial matrix factorizations due to the lack of a mature polynomial matrix theory.

 \cite{Youla1979Notes} studied the basic structure of multidimensional systems theory, and proposed three types of factorizations for multivariate polynomial matrices: zero prime factorization, minor prime factorization and factor prime factorization. The existence problem of zero prime factorizations for multivariate polynomial matrices with full rank first raised in \citep{Lin1999Notes}, and has been solved in \citep{Pommaret2001Solving,Wang2004On}. In recent years, the factorization problems of multivariate polynomial matrices without full rank deserve some attention. \cite{Lin2001A} studied a generalization of Serre's conjecture, and they pointed out some relationships between the existence of a zero prime factorization for a multivariate polynomial matrix without full rank and its an arbitrary full rank submatrix.

 \cite{Mingsheng2005On} completely solved the existence problem of minor prime factorizations for multivariate polynomial matrices with full rank, and proposed an effective algorithm. \cite{Guan2019} extended the main result in \citep{Mingsheng2005On} to the case of non-full rank. In order to study the existence problem of factor prime factorizations for multivariate polynomial matrices with full rank, \cite{Mingsheng2007On} proposed the concept of regularity and obtained a necessary and sufficient condition. \cite{Guan2018} gave an algorithm to determine whether a class of multivariate polynomial matrices without full rank has factor prime factorizations.

 Although some achievements have been made on the existence for factor prime factorizations of some classes of multivariate polynomial matrices, factor prime factorizations are still open problems. Therefore, we focus on factor left prime factorization problems for multivariate polynomial matrices without full row rank in this paper.

 The rest of the paper is organized as follows. In section \ref{sec_PP}, we introduce some basic concepts and present the two major problems on factor left prime factorizations. We present in section \ref{sec_MR} a necessary and sufficient condition for the existence of factor left prime factorizations of a class of multivariate polynomial matrices without full row rank. In section \ref{sec_AE}, we construct an algorithm and use two examples to illustrate the effectiveness of the algorithm. We end with some concluding remarks in section \ref{sec_conclusions}.

\section{Preliminaries and Problems}\label{sec_PP}

 We denote by $k$ an algebraically closed field, $\z$ the $n$ variables $z_1,\ldots,z_n$ where $n\geq 3$. Let $k[\z]$ be the polynomial ring, and $k[\z]^{l\times m}$ be the set of $l\times m$ matrices with entries in $k[\z]$. Throughout this paper, we assume that $l\leq m$. In addition, we use ``w.r.t." to represent ``with respect to".

 For any given polynomial matrix $\mathbf{F}\in k[\z]^{l\times m}$, let ${\rm rank}(\F)$ and $\mathbf{F}^{\rm T}$ be the rank and the transposed matrix of $\mathbf{F}$, respectively; if $l = m$, we use  ${\rm det}(\mathbf{F})$ to denote the determinant of $\mathbf{F}$; we denote by $\rho(\F)$ the submodule of $k[\z]^{1\times m}$ generated by the rows of $\F$; for each $i$ with $1\leq i \leq {\rm rank}(\F)$, let $d_i(\F)$ be the greatest common divisor of all the $i\times i$ minors of $\mathbf{F}$; let ${\rm Syz}(\F)$ be the syzygy module of $\mathbf{F}$, i.e., ${\rm Syz}(\F) = \{ \vec{v}\in k[\z]^{m\times 1} : \F\vec{v} = \vec{0} \}$.


\subsection{Basic Notions}

 The following three concepts, which were first proposed in \citep{Youla1979Notes}, play an important role in multidimensional systems.

 \begin{definition}
  Let $\mathbf{F}\in k[\z]^{l\times m}$ be of full row rank.
  \begin{enumerate}
    \item If all the $l\times l$ minors of $\mathbf{F}$ generate $k[\z]$, then $\mathbf{F}$ is said to be a zero left prime (ZLP) matrix.

    \item If all the $l\times l$ minors of $\mathbf{F}$ are relatively prime, i.e., $d_l(\mathbf{F})$ is a nonzero constant, then $\mathbf{F}$ is said to be an minor left prime (MLP) matrix.

    \item If for any polynomial matrix factorization $\mathbf{F} = \mathbf{F}_1\mathbf{F}_2$ in which $\mathbf{F}_1\in k[\z]^{l\times l}$, $\mathbf{F}_1$ is necessarily a unimodular matrix, i.e., ${\rm det}(\mathbf{F}_1)$ is a nonzero constant, then $\mathbf{F}$ is said to be a factor left prime (FLP) matrix.
  \end{enumerate}
 \end{definition}

 Let $\mathbf{F}\in k[\z]^{m\times l}$ with $m\geq l$, then a ZRP (MRP, FRP) matrix can be similarly defined. Note that ZLP $\Rightarrow$ MLP $\Rightarrow$ FLP. Youla and Gnavi proved that when $n=1$, the three concepts coincide; when $n=2$, ZLP is not equivalent to MLP, but MLP is the same as FLP; when $n\geq 3$, these concepts are pairwise different.

 A factorization of a multivariate polynomial matrix is formulated as follows.

 \begin{definition}\label{matrix_factorization}
  Let $\mathbf{F}\in k[\z]^{l\times m}$ with rank $r$ and $f$ is a divisor of $d_r(\F)$, where $1\leq r \leq l$. $\mathbf{F}$ is said to admit a factorization w.r.t. $f$ if $\mathbf{F}$ can be factorized as
  \begin{equation}\label{gerneral-matirx-factorization}
   \mathbf{F} = \mathbf{G}_1\mathbf{F}_1
  \end{equation}
  such that $\mathbf{F}_1\in k[\z]^{r\times m}$, $\mathbf{G}_1\in k[\z]^{l\times r}$ with $d_r(\mathbf{G}_1) = f$. In particular, Equation (\ref{gerneral-matirx-factorization}) is said to be a ZLP (MLP, FLP) factorization of $\F$ w.r.t. $f$ if $\F_1$ is a ZLP (MLP, FLP) matrix.
 \end{definition}

 In order to state conveniently problems and main conclusions of this paper, we introduce the following concepts and results.

 \begin{definition}\label{quotient-define}
  Let $\mathcal{K}$ be a submodule of $k[\z]^{1\times m}$, and $J$ be an ideal of $k[\z]$. We define $\mathcal{K} : J= \{ \vec{u}\in k[\z]^{1\times m} : J \vec{u} \subseteq \mathcal{K} \}$, where $J \vec{u}$ is the set $\{f\vec{u} : f\in J\}$.
 \end{definition}

 Obviously, $\mathcal{K} \subseteq \mathcal{K} : J$. Let $I \subset k[\z]$ be another ideal, it is easy to show that
 \begin{equation}\label{quotient-module-2}
   \mathcal{K} : (IJ) = (\mathcal{K} : I) : J.
 \end{equation}
 Equation (\ref{quotient-module-2}) is a simple generalization of Proposition 10 in subsection $4$, Zariski closure and quotients of ideals in \citep{Cox2007Ideals}. For convention, we write $\mathcal{K} :\langle f \rangle$ as $\mathcal{K} : f$ for any $f\in k[\z]$.

 \begin{definition}\label{torsion-define}
  Let $\mathcal{K}$ be a $k[\z]$-module. The torsion submodule of $\mathcal{K}$ is defined as ${\rm Torsion}(\mathcal{K}) = \{\vec{u}\in \mathcal{K}  :  \exists f \in k[\z] \backslash \{0\} \text{ such that } f\vec{u} = \vec{0} \}$.
 \end{definition}

 We refer to \citep{Eisenbud2013} for more details about the above two concepts. Let $\mathcal{K}_1,\mathcal{K}_2$ be two $k[\z]$-modules, we define $\mathcal{K}_1/\mathcal{K}_2 = \{\vec{u}+ \mathcal{K}_2 : \vec{u} \in \mathcal{K}_1\}$. \cite{Liu2015Further} established a relationship between Definition \ref{quotient-define} and Definition \ref{torsion-define}.

 \begin{lemma}\label{LW-Torsion}
  Let $\F\in k[\z]^{l\times m}$ be of full row rank, $d = d_l(\F)$ and $\mathcal{K} = \rho(\F)$. Then $(\mathcal{K}:d)/\mathcal{K} = {\rm Torsion}(k[\z]^{1\times m}/\mathcal{K})$.
 \end{lemma}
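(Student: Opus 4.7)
The plan is to prove the two inclusions $(\mathcal{K}:d)/\mathcal{K}\subseteq\mathrm{Torsion}(k[\z]^{1\times m}/\mathcal{K})$ and its reverse separately. First I would note that, because $\F$ has full row rank $l$, at least one $l\times l$ minor of $\F$ is nonzero, so $d=d_l(\F)$ is a nonzero element of $k[\z]$; this nonvanishing is essentially the only place the full-row-rank hypothesis is used.

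The inclusion $\subseteq$ is immediate: if $\vec{u}+\mathcal{K}\in(\mathcal{K}:d)/\mathcal{K}$, then $d\vec{u}\in\mathcal{K}$, so the coset $\vec{u}+\mathcal{K}$ is annihilated by the nonzero element $d$ and is therefore torsion in $k[\z]^{1\times m}/\mathcal{K}$.

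For the reverse containment, suppose $\vec{u}+\mathcal{K}$ is torsion, so $f\vec{u}=\vec{a}\F$ for some $f\in k[\z]\setminus\{0\}$ and some $\vec{a}\in k[\z]^{1\times l}$. Passing to the fraction field $k(\z)$, this says $\vec{u}=\vec{c}\F$ with $\vec{c}=\vec{a}/f\in k(\z)^{1\times l}$. For each size-$l$ subset $J$ of column indices for which the corresponding $l\times l$ submatrix $\F_J$ satisfies $\det(\F_J)\ne 0$, I would extract the columns indexed by $J$ from both sides of $\vec{u}=\vec{c}\F$ to get $\vec{u}_J=\vec{c}\F_J$, and then right-multiply by the classical adjugate of $\F_J$ to obtain
\[
\det(\F_J)\,\vec{c}=\vec{u}_J\,\mathrm{adj}(\F_J)\in k[\z]^{1\times l}.
\]
Writing each entry of $\vec{c}$ in lowest terms in the UFD $k[\z]$, its denominator therefore divides every such $\det(\F_J)$, and hence divides the greatest common divisor of all nonzero $l\times l$ minors, which is $d$. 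Consequently $d\vec{c}\in k[\z]^{1\times l}$ and
\[
d\vec{u}=(d\vec{c})\F\in\rho(\F)=\mathcal{K},
\]
so $\vec{u}\in\mathcal{K}:d$, as required.

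The substantive step is the last one: reducing the a priori denominator $f$ to the optimal one $d$. Its success hinges on two ingredients, namely the Cramer/adjugate identity applied to \emph{every} full-rank $l\times l$ submatrix of $\F$ (which bounds the denominator of $\vec{c}$ by each nonzero $l\times l$ minor), and the UFD structure of $k[\z]$ (which turns ``divides each nonzero $l\times l$ minor'' into ``divides their gcd $d$''). Everything else---the easy inclusion and the passage to $k(\z)$---is formal.
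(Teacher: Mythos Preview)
Your proof is correct. The easy inclusion is immediate, and for the hard inclusion your adjugate argument is sound: from $f\vec{u}=\vec{a}\F$ you set $\vec{c}=\vec{a}/f\in k(\z)^{1\times l}$, and for every $l$-subset $J$ with $\det(\F_J)\neq 0$ the identity $\det(\F_J)\,\vec{c}=\vec{u}_J\,\mathrm{adj}(\F_J)$ shows $\det(\F_J)\vec{c}\in k[\z]^{1\times l}$. Writing each entry of $\vec{c}$ in lowest terms in the UFD $k[\z]$ then forces its denominator to divide every nonzero $l\times l$ minor, hence their gcd $d$ (the zero minors being irrelevant for a gcd), whence $d\vec{c}\in k[\z]^{1\times l}$ and $d\vec{u}=(d\vec{c})\F\in\mathcal{K}$.

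As for comparison: the paper does not actually supply a proof of this lemma. It is quoted as a result of Liu and Wang (the reference \texttt{Liu2015Further}) and used as a black box, in particular in the proof of Theorem~\ref{LWX-module}. So your argument adds a self-contained justification where the paper offers only a citation. Your approach is in fact close in spirit to Lemma~\ref{Serre-LW} (also cited from Liu and Wang), which packages the same adjugate idea as the existence of matrices $\mathbf{V}_i$ with $\F\mathbf{V}_i=d\varphi_i\mathbf{I}_{l\times l}$; one could alternatively derive the hard inclusion from that lemma by multiplying $f\vec{u}=\vec{a}\F$ on the right by each $\mathbf{V}_i$ and using $\gcd(\varphi_1,\ldots,\varphi_n)=1$. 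Your direct Cramer/UFD argument is slightly more elementary and avoids that detour.
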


  Moreover, Liu and Wang further extended the Youla's MLP lemma, which had been used to give another proof of the Serre's problem.

 \begin{lemma}\label{Serre-LW}
  Let $\F\in k[\z]^{l\times m}$ be of full row rank and $d = d_l(\F)$. Then for each $i=1,\ldots,n$, there exists $\mathbf{V}_i\in k[\z]^{m\times l}$ such that $\F\mathbf{V}_i = d\varphi_i \mathbf{I}_{l\times l}$, where $\varphi_i$ is nonzero and independent of $z_i$.
 \end{lemma}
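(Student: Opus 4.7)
The plan is to build $V_i$ as a $k[\z]$-linear combination of ``adjugate-extended'' matrices, one per maximal minor, and then use a PID/Gauss's lemma argument to extract the $z_i$-free factor $\varphi_i$.

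First, for each $l$-subset $J \subseteq \{1,\ldots,m\}$ I will let $F_J$ denote the $l \times l$ submatrix of $\F$ on the columns indexed by $J$, set $p_J = \det(F_J)$, and form $U_J \in k[\z]^{m \times l}$ by placing $\mathrm{adj}(F_J)$ in the rows indexed by $J$ and zeros in the remaining rows. A direct block computation then yields $\F\, U_J = p_J I_l$, and consequently any $k[\z]$-combination $V = \sum_J c_J U_J$ satisfies $\F V = \bigl(\sum_J c_J p_J\bigr) I_l$. The entire problem thus reduces to representing $d\varphi_i$ as a $k[\z]$-combination of the $p_J$'s, with a suitable $\varphi_i$ that is nonzero and independent of $z_i$.

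Since $d = d_l(\F) = \gcd_{k[\z]}\{p_J\}$, writing $p_J = d\, q_J$ gives $\gcd_{k[\z]}\{q_J\} = 1$. Next I fix $i$ and pass to the PID $R_i := k(z_1,\ldots,\hat{z}_i,\ldots,z_n)[z_i]$. I will claim that $\gcd_{R_i}\{q_J\}$ is a unit of $R_i$: any non-unit common factor in $R_i$ could be normalized (by clearing its content in $k[z_1,\ldots,\hat{z}_i,\ldots,z_n]$) to a polynomial primitive in $z_i$, which by Gauss's lemma would divide every $q_J$ inside $k[\z]$, contradicting $\gcd_{k[\z]}\{q_J\} = 1$. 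With this claim in hand, the $q_J$'s generate the unit ideal of $R_i$, and there exist $a_J \in R_i$ with $\sum_J a_J q_J = 1$.

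To finish, I will clear denominators: the $a_J$'s have denominators in $k[z_1,\ldots,\hat{z}_i,\ldots,z_n] \setminus \{0\}$, so multiplying through by a common denominator $\varphi_i$ from that ring produces an identity $\sum_J c_J q_J = \varphi_i$ with $c_J \in k[\z]$ and $0 \neq \varphi_i \in k[z_1,\ldots,\hat{z}_i,\ldots,z_n]$. Multiplying by $d$ yields $\sum_J c_J p_J = d\varphi_i$, and then $V_i := \sum_J c_J U_J \in k[\z]^{m \times l}$ satisfies $\F V_i = d\varphi_i I_l$ with $\varphi_i$ nonzero and independent of $z_i$, as required. The only genuine subtlety is the Gauss's lemma step that lifts relative primality from $k[\z]$ to the univariate PID $R_i$; the adjugate identity and the denominator-clearing are routine.
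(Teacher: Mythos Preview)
The paper does not supply its own proof of this lemma; it is quoted from \cite{Liu2015Further} as a known extension of Youla's MLP lemma, so there is no in-paper argument to compare against. That said, your proof is correct and is essentially the standard one. The adjugate-embedding identity $\F U_J = p_J I_l$ is verified by the block computation you describe, and it reduces the problem to exhibiting $d\varphi_i$ in the ideal $\langle p_J : J\rangle \subset k[\z]$ with $\varphi_i$ free of $z_i$. Your passage to the PID $R_i = k(z_1,\ldots,\hat z_i,\ldots,z_n)[z_i]$ and the Gauss's-lemma step are exactly the right mechanism: a common divisor of the $q_J$ in $R_i$, once made primitive in $z_i$ over $k[z_1,\ldots,\hat z_i,\ldots,z_n]$, divides each $q_J$ in $k[\z]$ and hence is a unit; Bezout in $R_i$ followed by denominator-clearing then yields $\sum_J c_J q_J = \varphi_i$ with $c_J \in k[\z]$ and $0 \neq \varphi_i$ independent of $z_i$. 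No step is missing; the only place one must be careful is stating precisely which form of Gauss's lemma is invoked (namely: over a UFD $A$ with fraction field $K$, a primitive $g\in A[x]$ that divides $f\in A[x]$ in $K[x]$ already divides it in $A[x]$), and you have that right.
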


 \cite{Guan2018} proved the following lemma, which is similar to the above result.

 \begin{lemma}\label{minor-Guan}
  Let $\G\in k[\z]^{l\times r}$ be of full column rank with $l \geq r$, and $g$ be an arbitrary $r\times r$ minor of $\G$. Then there exists $\G'\in k[\z]^{r\times l}$ such that $\G'\G = g\mathbf{I}_{r\times r}$.
 \end{lemma}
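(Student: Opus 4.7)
The plan is to prove this by the classical adjugate (Cramer's rule) construction, adapted to the rectangular setting. Since $\G \in k[\z]^{l \times r}$ has only $r$ columns, any $r \times r$ minor $g$ must arise from selecting some row index subset $S = \{i_1 < i_2 < \cdots < i_r\} \subseteq \{1,\ldots,l\}$ and taking the determinant of the resulting $r \times r$ submatrix $\G_S \in k[\z]^{r \times r}$; in other words, $g = \det(\G_S)$.

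Next, I would invoke the standard adjugate identity, which holds over any commutative ring and hence over $k[\z]$: $\mathrm{adj}(\G_S) \cdot \G_S = \det(\G_S)\, \mathbf{I}_{r\times r} = g\, \mathbf{I}_{r\times r}$, where $\mathrm{adj}(\G_S) \in k[\z]^{r\times r}$ is the classical adjoint matrix whose entries are signed $(r-1)\times(r-1)$ minors of $\G_S$.

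Finally, I would construct $\G' \in k[\z]^{r \times l}$ by placing the $r$ columns of $\mathrm{adj}(\G_S)$ into the columns of $\G'$ indexed by $S$ (preserving their order $i_1 < \cdots < i_r$) and setting every other column of $\G'$ equal to the zero vector. Then block (row-by-column) multiplication gives $\G'\G = \mathrm{adj}(\G_S)\cdot \G_S = g\, \mathbf{I}_{r \times r}$, because the zero columns of $\G'$ annihilate the rows of $\G$ whose indices lie outside $S$.

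There is essentially no technical obstacle: the adjugate formula is available verbatim over the commutative ring $k[\z]$, and the zero-padding step is purely formal. It is worth noting that the full column rank assumption plays no role in the construction itself; it only ensures that some $r \times r$ minor $g$ is nonzero, so that the identity $\G'\G = g\, \mathbf{I}_{r\times r}$ is nontrivial.
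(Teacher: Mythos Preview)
Your proof is correct: the adjugate construction is exactly the right tool here, and your zero-padding argument cleanly reduces the rectangular case to the square one. Note that the paper does not actually prove this lemma; it is quoted without proof from \cite{Guan2018}, so there is no in-paper argument to compare against, but the adjugate route you give is the standard and expected one for results of this type.
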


 In order to study the properties of multivariate polynomial matrices, \cite{Lin1988On} and \cite{Sule1994Feed} introduced the following important concept.

 \begin{definition}
  Let $\mathbf{F}\in k[\z]^{l\times m}$ with rank $r$, where $1\leq r \leq l$. For any given integer $i$ with $1\leq i \leq r$, let $a_1,\ldots,a_\beta$ denote all the $i\times i$ minors of $\mathbf{F}$, where $\beta = \binom l{i} \cdot \binom m{i}$. Extracting $d_i(\mathbf{F})$ from $a_1,\ldots,a_\beta$ yields
  \[a_j = d_i(\mathbf{F})\cdot b_j, ~ j=1,\ldots,\beta.\]
  Then, $b_1,\ldots,b_\beta$ are called all the $i\times i$ reduced minors of $\mathbf{F}$.
 \end{definition}

 \cite{Lin1988On} showed that reduced minors are important invariants for multivariate polynomial matrices.

 \begin{lemma}\label{RM_relation}
  Let $\F_1\in k[\z]^{r\times t}$ be of full row rank, $b_1, \ldots, b_{\gamma}$ be all the $r\times r$ reduced minors of $\F_1$, and $\F_2\in k[\z]^{t\times (t-r)}$ be of full column rank, $\bar{b}_1, \ldots, \bar{b}_{\gamma}$ be all the $(t-r)\times (t-r)$ reduced minors of $\F_2$, where $r<t$ and $\gamma = \binom {t}{r}$. If $\F_1\F_2 = \mathbf{0}_{r\times(t-r)}$, then $\bar{b}_i=\pm b_i$ for $i=1,\ldots,\gamma$, and signs depend on indices.
 \end{lemma}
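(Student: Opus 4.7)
The plan is to exploit the classical Plücker duality between the row span of $\F_1$ and its kernel, which by hypothesis is exactly the column span of $\F_2$ over the field of fractions $k(\z)$. First I would pass to $k(\z)$: since $\F_1\F_2 = 0$, the columns of $\F_2$ lie in $\ker(\F_1)$; since $\F_1$ has full row rank $r$, its kernel has $k(\z)$-dimension $t-r$; and since $\F_2$ has full column rank $t-r$, its columns form a basis of $\ker(\F_1)$ over $k(\z)$. So the two subspaces in question -- $\rho(\F_1)$ and the column span of $\F_2$ -- are paired by the standard evaluation pairing.

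The heart of the argument is to show that for every $r$-subset $S \subseteq \{1,\dots,t\}$ with complement $S^c$, the $r\times r$ minor $a_S$ of $\F_1$ on columns $S$ and the $(t-r)\times(t-r)$ minor $\bar{a}_{S^c}$ of $\F_2$ on rows $S^c$ satisfy
\[
 a_S \;=\; \epsilon_S\,\lambda\,\bar{a}_{S^c}
\]
for signs $\epsilon_S\in\{\pm1\}$ determined by the orientation and a single scalar $\lambda\in k(\z)^{*}$ independent of $S$. I would derive this either by Laplace-expanding $\det\!\bigl(\begin{smallmatrix}\F_1\\ \F_2^{T}\end{smallmatrix}\bigr)$ along the first $r$ rows and combining it with Cramer's rule applied to a single invertible submatrix $\F_1[:,S_0]$, or equivalently by observing that in $\wedge^{r}k(\z)^{t}$ the Plücker vector of a subspace is unique up to scalar, so the Plücker vector $(\bar{a}_{T})_{|T|=t-r}$ of $\ker(\F_1)$ must be a uniform rational multiple of the Hodge dual $(\pm a_{T^c})$ of the Plücker vector of $\rho(\F_1)$.

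Once proportionality is in hand, substituting the factorizations $a_S = d_r(\F_1)\,b_S$ and $\bar{a}_{S^c} = d_{t-r}(\F_2)\,\bar{b}_{S^c}$ yields
\[
 d_r(\F_1)\,b_S \;=\; \epsilon_S\,\lambda\,d_{t-r}(\F_2)\,\bar{b}_{S^c}
\]
for every $S$. Writing $\lambda = p/q$ in lowest terms and taking the gcd of both sides over $S$, the identities $\gcd_{S}b_S = 1 = \gcd_{S^c}\bar{b}_{S^c}$ force $q\,d_{t-r}(\F_2)$ and $p\,d_r(\F_1)$ to be associates in $k[\z]$. Consequently $\lambda\,d_{t-r}(\F_2)/d_r(\F_1)$ is a nonzero constant in $k$, and after absorbing it into the (unit-ambiguous) choice of reduced minors we obtain $b_S = \pm\,\bar{b}_{S^c}$, which is exactly the claim up to the promised sign and reindexing.

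The step I expect to be the main obstacle is the Plücker duality in the middle paragraph: producing the uniform scalar $\lambda$ rigorously over the polynomial ring (as opposed to just over $k(\z)$) while keeping track of the signs $\epsilon_S$. A concrete fallback that avoids exterior-algebra bookkeeping is to fix one $r$-subset $S_0$ with $a_{S_0}\neq 0$, invert $\F_1[:,S_0]$ over $k(\z)$ using its adjugate (or invoke Lemma~\ref{minor-Guan} applied to a full column rank submatrix of $\F_1^{T}$ together with the cited Lemma~\ref{Serre-LW}), and then read off each $a_S$ as a cofactor expansion of $\F_1[:,S_0]^{-1}\F_1[:,S^c]$ applied to the columns of $\F_2$; this expresses every $a_S$ directly as a $k(\z)$-multiple of $\bar{a}_{S^c}$ with an explicit sign, from which the uniform proportionality and the gcd argument proceed as above.
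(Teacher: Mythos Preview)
The paper does not supply its own proof of this lemma; it is quoted from \cite{Lin1988On} without argument. Your Pl\"ucker-duality outline is the standard route and is correct: over $k(\z)$ the column span of $\F_2$ equals $\ker(\F_1)$, the Hodge isomorphism $\wedge^{r}k(\z)^{t}\cong\wedge^{t-r}k(\z)^{t}$ carries the Pl\"ucker vector of the row span of $\F_1$ to that of its annihilator up to a single scalar $\lambda$, and the gcd extraction then descends this proportionality from $k(\z)$ to the statement about reduced minors in $k[\z]$.

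Two small points worth tightening. First, the Laplace-expansion option you mention does not by itself deliver the family of proportionalities: expanding $\det\!\bigl(\begin{smallmatrix}\F_1\\ \F_2^{\rm T}\end{smallmatrix}\bigr)$ along its first $r$ rows produces only the single scalar $\sum_{S}\pm a_{S}\bar a_{S^{c}}$, not one relation per $S$. The exterior-algebra argument, or your adjugate fallback fixing one invertible $r\times r$ block and expressing the remaining columns through it, is what actually carries the load; I would drop the Laplace remark or replace it with the explicit Cramer computation. Second, your gcd step yields $b_{S}=c\,\epsilon_{S}\,\bar b_{S^{c}}$ for a single nonzero constant $c\in k$, not literally $\pm 1$; this is harmless because $d_{r}(\F_1)$ and $d_{t-r}(\F_2)$, and hence the reduced minors, are only determined up to units in $k$, but since the lemma is stated with $\pm$ it is worth saying explicitly that one normalizes the gcds so that $c=1$.
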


 Let $\mathbf{F}\in k[\z]^{l\times m}$ with rank $r$, where $1\leq r < l$. Let $\bar{\F}_1,\ldots,\bar{\F}_\eta \in k[\z]^{l\times r}$ be all the full column rank submatrices of $\F$, where $1\leq \eta \leq \binom{m}{r}$. According to Lemma \ref{RM_relation}, it follows that $\bar{\F}_1,\ldots,\bar{\F}_\eta$ have the same $r\times r$ reduced minors. Based on this phenomenon, we give the following concept which was first proposed in \citep{Lin2001A}.

 \begin{definition}
  Let $\mathbf{F}\in k[\z]^{l\times m}$ with rank $r$, and $\bar{\F}\in k[\z]^{l\times r}$ be an arbitrary full column rank submatrix of $\F$, where $1\leq r < l$. Let $c_1,\ldots, c_\xi$ be all the $r\times r$ reduced minors of $\bar{\F}$, where $\xi = \binom{l}{r}$. Then $c_1,\ldots, c_\xi$ are called all the $r\times r$ {\bf column} reduced minors of $\F$.
 \end{definition}

 The above concept will play an important role in this paper. Obviously, the calculation amount of all the $r\times r$ column reduced minors of $\F$ is much less than that of all the $r\times r$ reduced minors of $\F$ in general.

 \begin{lemma}\label{QS-theorem}
  Let $\mathbf{U}\in k[\z]^{l \times m}$ be a ZLP matrix, where $l<m$. Then there exists a ZRP matrix $\mathbf{V}\in k[\z]^{m \times l}$ such that $\mathbf{U}\mathbf{V} = \mathbf{I}_{l\times l}$. Moreover, ${\rm Syz}(\mathbf{U})$ is a free submodule of $k[\z]^{m\times 1}$ with rank $m-l$.
 \end{lemma}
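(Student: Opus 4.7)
The plan is to combine the Quillen–Suslin theorem (which resolves Serre's conjecture and is implicitly used elsewhere in the paper) with a simple block-matrix inversion trick. Because $\mathbf{U}$ is ZLP, its $l\times l$ minors generate $k[\z]$, so by Quillen–Suslin the row module $\rho(\mathbf{U})$ is a direct summand of $k[\z]^{1\times m}$, and therefore $\mathbf{U}$ can be completed to a unimodular matrix. Concretely, there exists $\mathbf{W}\in k[\z]^{(m-l)\times m}$ such that
\[
\mathbf{M} \;=\; \begin{pmatrix} \mathbf{U} \\ \mathbf{W} \end{pmatrix}\in k[\z]^{m\times m}
\]
is unimodular. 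Partition its inverse as $\mathbf{M}^{-1}=(\mathbf{V},\mathbf{V}')$, with $\mathbf{V}\in k[\z]^{m\times l}$ and $\mathbf{V}'\in k[\z]^{m\times(m-l)}$. Expanding $\mathbf{M}\mathbf{M}^{-1}=\mathbf{I}_{m\times m}$ blockwise gives $\mathbf{U}\mathbf{V}=\mathbf{I}_{l\times l}$, $\mathbf{U}\mathbf{V}'=\mathbf{0}$, $\mathbf{W}\mathbf{V}=\mathbf{0}$, and $\mathbf{W}\mathbf{V}'=\mathbf{I}_{(m-l)\times(m-l)}$.

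To see that $\mathbf{V}$ is ZRP, I would apply the Cauchy–Binet formula to the identity $\mathbf{U}\mathbf{V}=\mathbf{I}_{l\times l}$: taking determinants gives
\[
1 \;=\; \det(\mathbf{U}\mathbf{V}) \;=\; \sum_{S} \det(\mathbf{U}_{\bullet,S})\cdot \det(\mathbf{V}_{S,\bullet}),
\]
where $S$ ranges over all $l$-subsets of $\{1,\ldots,m\}$. Since every summand is a product of an $l\times l$ minor of $\mathbf{V}$ with a polynomial, the $l\times l$ minors of $\mathbf{V}$ already generate the unit ideal, proving that $\mathbf{V}$ is ZRP.

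For the second assertion, I would identify ${\rm Syz}(\mathbf{U})$ with the image of $\mathbf{V}'$. The containment ${\rm Im}(\mathbf{V}')\subseteq{\rm Syz}(\mathbf{U})$ is immediate from $\mathbf{U}\mathbf{V}'=\mathbf{0}$. For the reverse, given any $\vec{v}\in{\rm Syz}(\mathbf{U})$, use $\mathbf{M}^{-1}\mathbf{M}=\mathbf{I}_{m\times m}$ to write
\[
\vec{v} \;=\; \mathbf{V}(\mathbf{U}\vec{v}) + \mathbf{V}'(\mathbf{W}\vec{v}) \;=\; \mathbf{V}'(\mathbf{W}\vec{v}),
\]
which places $\vec{v}$ in ${\rm Im}(\mathbf{V}')$. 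Since $\mathbf{W}\mathbf{V}'=\mathbf{I}_{(m-l)\times(m-l)}$, the $k[\z]$-linear map $k[\z]^{(m-l)\times 1}\to k[\z]^{m\times 1}$ induced by $\mathbf{V}'$ is injective, so it is an isomorphism onto ${\rm Syz}(\mathbf{U})$. Hence ${\rm Syz}(\mathbf{U})$ is free of rank $m-l$.

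The main obstacle here is not really obstruction at all but rather the invocation of Quillen–Suslin to produce the unimodular completion $\mathbf{M}$; I would cite this as a known theorem rather than reprove it. Everything else is straightforward block-matrix algebra, so the write-up should essentially amount to assembling these three steps and recording the Cauchy–Binet observation.
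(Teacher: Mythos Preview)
Your argument is correct, but note that the paper does not actually prove this lemma: it simply states it and attributes it to \cite{Quillen1976Projective} and \cite{Suslin1976Projective} as the Quillen--Suslin theorem. So there is no ``paper's own proof'' to compare against.

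That said, what you have written is a clean and standard derivation of the concrete matrix-theoretic formulation in the lemma from the abstract Quillen--Suslin statement (projective modules over $k[\z]$ are free, equivalently unimodular rows complete). The three ingredients---unimodular completion of $\mathbf{U}$, Cauchy--Binet applied to $\mathbf{U}\mathbf{V}=\mathbf{I}_{l\times l}$ to certify that $\mathbf{V}$ is ZRP, and the splitting $\vec{v}=\mathbf{V}(\mathbf{U}\vec{v})+\mathbf{V}'(\mathbf{W}\vec{v})$ to identify ${\rm Syz}(\mathbf{U})$ with the free module ${\rm Im}(\mathbf{V}')$---are exactly the right ones, and your own remark that the only nontrivial input is the existence of the unimodular completion is accurate. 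If anything, you could tighten the exposition by noting that the ZLP hypothesis is precisely what makes Quillen--Suslin applicable (the cokernel of $\mathbf{U}^{\rm T}$ is projective because the maximal minors generate the unit ideal), but this is implicit in what you wrote.
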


 The above result is called the Quillen-Suslin theorem. In order to solve the problem whether any finitely generated projective module over a polynomial ring is free, \cite{Quillen1976Projective} and  \cite{Suslin1976Projective} solved the problem positively and independently.

 Using the Quillen-Suslin theorem, \cite{Pommaret2001Solving} and \cite{Wang2004On} solved the Lin-Bose conjecture.

 \begin{lemma}\label{Lin-Bose-conjecture}
  Let $\F\in k[\z]^{l \times m}$ be of full row rank, where $l<m$. If all the $l\times l$ reduced minors of $\F$ generate $k[\z]$, then $\F$ has a ZLP factorization.
 \end{lemma}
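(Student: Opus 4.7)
The plan is to produce $\F_1$ as a basis matrix of the saturation module $\mathcal{M} := \rho(\F):d$, where $d = d_l(\F)$, and then to read off $\mathbf{G}_1$ from the inclusion $\rho(\F) \subseteq \mathcal{M}$. First I would let $b_1,\ldots,b_\gamma$ denote the $l\times l$ reduced minors of $\F$, so that by hypothesis $\sum_I q_I b_I = 1$ for some $q_I \in k[\z]$. By Lemma \ref{LW-Torsion}, the quotient $\mathcal{M}/\rho(\F)$ coincides with the torsion submodule of $k[\z]^{1\times m}/\rho(\F)$, so $N' := k[\z]^{1\times m}/\mathcal{M}$ is torsion-free of rank $m-l$. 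The candidate ZLP factor will be any matrix $\F_1 \in k[\z]^{l\times m}$ whose rows form a basis of $\mathcal{M}$; the challenge is to prove such a basis exists and that the resulting $\F_1$ is ZLP.

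The technical heart of the argument is to upgrade torsion-freeness to projectivity of $N'$. I would verify this locally: at an arbitrary maximal ideal $\mathfrak{m}$ of $k[\z]$ the identity $\sum_I q_I b_I = 1$ forces some $b_J \notin \mathfrak{m}$, so $b_J$ becomes a unit in $k[\z]_\mathfrak{m}$. The $l\times l$ submatrix $\F_J$ indexed by the column set $J$ then has determinant $a_J = d b_J$ associated to $d$ in the localization, and its classical adjugate together with the partial inverses $\mathbf{V}_i$ of Lemma \ref{Serre-LW} provides an explicit column-reduction that simultaneously exhibits $\mathcal{M}_\mathfrak{m}$ as a free direct summand of $k[\z]_\mathfrak{m}^{1\times m}$ of rank $l$ and $N'_\mathfrak{m}$ as a free complement of rank $m-l$. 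Once this local freeness is established on the open cover by the complements of the $b_J$, $N'$ is projective.

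Now the Quillen--Suslin theorem (the content of Lemma \ref{QS-theorem}) upgrades projective to free, so $\mathcal{M}$ is free of rank $l$ and $N'$ is free of rank $m-l$. Picking a basis gives a matrix $\F_1 \in k[\z]^{l\times m}$ whose rows generate a direct summand of $k[\z]^{1\times m}$, which is equivalent to $\F_1$ being ZLP. Since $\rho(\F) \subseteq \rho(\F_1)$, there is a unique $\mathbf{G}_1 \in k[\z]^{l\times l}$ with $\F = \mathbf{G}_1 \F_1$, and the Cauchy--Binet formula gives $d = d_l(\F) = \det(\mathbf{G}_1)\cdot d_l(\F_1)$, with $d_l(\F_1)$ a unit, so $\det(\mathbf{G}_1)$ is associated to $d$. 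The main obstacle is the projectivity step: torsion-freeness alone does not imply projectivity over $k[\z]$ when $n\geq 2$, and the reduced-minors hypothesis enters precisely here, by guaranteeing the local trivializations that Quillen--Suslin then patches into a free module.
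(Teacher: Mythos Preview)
The paper does not supply its own proof of this lemma; it is quoted as the Lin--Bose conjecture, with attribution to \cite{Pommaret2001Solving} and \cite{Wang2004On}, so there is no in-paper argument to compare against.

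Your outline is a correct route to the result and is close in spirit to the published proofs. The decisive step, as you rightly flag, is upgrading torsion-freeness of $N' = k[\z]^{1\times m}/\mathcal{M}$ to projectivity, and the localization scheme you propose does work. Concretely, at a maximal ideal $\mathfrak{m}$ with $b_J\notin\mathfrak{m}$, the matrix $\F_J^{-1}\F$ has all entries in $k[\z]_\mathfrak{m}$ (each entry outside the $J$-block is $\pm b_{J'}/b_J$ for some reduced minor $b_{J'}$), contains an $l\times l$ identity block, and its row span coincides with $\mathcal{M}_\mathfrak{m}$; hence $\mathcal{M}_\mathfrak{m}$ is a free direct summand and $N'_\mathfrak{m}$ is free of rank $m-l$. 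Two small points: the appeal to Lemma~\ref{Serre-LW} is not actually needed here---the adjugate of $\F_J$ alone delivers both containments $\mathcal{M}_\mathfrak{m}=\rho(\F_J^{-1}\F)_\mathfrak{m}$; and you should remark that $N'$ is finitely presented (automatic, since $k[\z]$ is Noetherian), so that local freeness on the cover $\{D(b_J)\}$ genuinely yields projectivity. After that, your concluding chain---Quillen--Suslin makes $N'$ free, the short exact sequence $0\to\mathcal{M}\to k[\z]^{1\times m}\to N'\to 0$ splits, any basis matrix $\F_1$ of $\mathcal{M}$ is ZLP, and Cauchy--Binet gives $\det(\mathbf{G}_1)$ associate to $d$---is correct as stated.
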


 Let $\mathbf{F}\in k[\z]^{l\times m}$ be of full row rank, and $f$ be a divisor of $d_l(\mathbf{F})$. In order to study a factorization of $\mathbf{F}$ w.r.t. $f$, \cite{Mingsheng2007On} introduced the concept of regularity. $f$ is said to be regular w.r.t. $\F$ if and only if $d_l([f\mathbf{I}_{l\times l} ~ \F]) = f$ up to multiplication by a nonzero constant. Then, Wang obtained the following result.

 \begin{lemma}\label{W-flp}
  Let $\F\in k[\z]^{l\times m}$ be of full row rank, and $f$ be regular w.r.t. $\F$. Then $\F$ has a factorization w.r.t. $f$ if and only if $\rho(\F):f$ is a free module of rank $l$.
 \end{lemma}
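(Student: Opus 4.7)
The plan is to prove both directions of the equivalence by establishing $\rho(\F):f = \rho(\F_1)$ for a suitable $\F_1 \in k[\z]^{l\times m}$ of full row rank; since $\rho(\F_1)$ is then free of rank $l$, this pins down the freeness condition. For the forward direction, $\F_1$ will be the one furnished by the given factorization; for the reverse direction, $\F_1$ will be assembled from a basis of $\rho(\F):f$.

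For $(\Rightarrow)$, assume $\F = \G_1\F_1$ with ${\rm det}(\G_1) = cf$ for some $c \in k^*$. Since $\G_1$ is invertible over $k(\z)$, $\F_1$ inherits full row rank from $\F$. The inclusion $\rho(\F_1) \subseteq \rho(\F):f$ is immediate from $\G_1^{\rm adj}\F = cf\F_1$, which exhibits $f\F_1$ as a $k[\z]$-linear combination of rows of $\F$. For the reverse inclusion, given $\vec{u}\in\rho(\F):f$ I would write $f\vec{u} = (\vec{w}\G_1)\F_1$ and apply Lemma \ref{minor-Guan} to $\F_1^{\rm T}$ to obtain $\F_1' \in k[\z]^{m\times l}$ with $\F_1\F_1' = g\mathbf{I}_l$ for every $l\times l$ minor $g$ of $\F_1$, yielding $f \mid g(\vec{w}\G_1)$ componentwise. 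Feeding the identity $[f\mathbf{I}_l \mid \F] = \G_1[c^{-1}\G_1^{\rm adj} \mid \F_1]$ into Cauchy--Binet and invoking regularity shows that the $l\times l$ minors of $[c^{-1}\G_1^{\rm adj} \mid \F_1]$ have GCD a unit; this coprimality, combined with the componentwise divisibilities above, will upgrade to $f \mid \vec{w}\G_1$, so that $\vec{u} = (\vec{w}\G_1/f)\F_1 \in \rho(\F_1)$.

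For $(\Leftarrow)$, fix a basis of $\rho(\F):f$ (which has rank $l$, since localization at $f$ collapses $\rho(\F):f$ onto $\rho(\F)$) and assemble it into a full-row-rank matrix $\F_1$. The inclusion $\rho(\F) \subseteq \rho(\F_1)$ yields a unique $\G_1 \in k[\z]^{l\times l}$ with $\F = \G_1\F_1$, and the fact that rows of $\F_1$ lie in $\rho(\F):f$ produces $\mathbf{H} \in k[\z]^{l\times l}$ with $f\F_1 = \mathbf{H}\F$; the full row rank of $\F_1$ then forces $\mathbf{H}\G_1 = f\mathbf{I}_l = \G_1\mathbf{H}$, whence ${\rm det}(\G_1) \mid f^l$. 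Applying Cauchy--Binet to $[f\mathbf{I}_l \mid \F] = \G_1[\mathbf{H} \mid \F_1]$ together with regularity strengthens this to ${\rm det}(\G_1) \mid f$.

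The main obstacle is establishing the reverse divisibility $f \mid {\rm det}(\G_1)$. My plan is to invoke Lemma \ref{Lin-Bose-conjecture} on $[f\mathbf{I}_l \mid \F]$: regularity guarantees that its reduced $l\times l$ minors are coprime, and the freeness of $\rho(\F):f$ is what I expect upgrades this to the Lin--Bose hypothesis that the reduced minors generate $k[\z]$, producing a ZLP factorization $[f\mathbf{I}_l \mid \F] = \mathbf{T}[\mathbf{Q}_1 \mid \mathbf{Q}_2]$ with ${\rm det}(\mathbf{T}) = f$ up to a unit. Then $\F = \mathbf{T}\mathbf{Q}_2$; since $f\mathbf{Q}_2 = \mathbf{T}^{\rm adj}\F \in \rho(\F)^{l\times m}$, the rows of $\mathbf{Q}_2$ lie in $\rho(\F):f = \rho(\F_1)$, so $\mathbf{Q}_2 = \mathbf{A}\F_1$ for some $\mathbf{A}\in k[\z]^{l\times l}$, and uniqueness of $\G_1$ forces $\G_1 = \mathbf{T}\mathbf{A}$. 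Hence ${\rm det}(\G_1) = f\cdot{\rm det}(\mathbf{A})$, which combined with ${\rm det}(\G_1) \mid f$ from the previous paragraph gives ${\rm det}(\G_1) = f$ up to a unit, furnishing the desired factorization of $\F$ with respect to $f$.
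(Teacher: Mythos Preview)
The paper does not prove this lemma; it is quoted from \cite{Mingsheng2007On} without an accompanying argument, so there is no in-paper proof to compare against. I can therefore only assess your proposal on its own terms.

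\textbf{Forward direction.} The strategy is sound, but there is a mismatch in the execution. You obtain $f\mid g(\vec w\,\G_1)$ only for $g$ ranging over the $l\times l$ minors of $\F_1$, yet you then invoke coprimality of the $l\times l$ minors of the \emph{larger} matrix $[c^{-1}\G_1^{\rm adj}\mid\F_1]$. These two facts do not combine as written. The repair is immediate: from $(\vec w\,\G_1)\,[c^{-1}\G_1^{\rm adj}\mid\F_1]=f\,[\vec w\mid\vec u]$, apply Lemma~\ref{minor-Guan} to the transpose of $[c^{-1}\G_1^{\rm adj}\mid\F_1]$ to get $f\mid h(\vec w\,\G_1)$ for every $l\times l$ minor $h$ of that larger matrix; now coprimality of the $h$'s (which you correctly extract from regularity via Cauchy--Binet) does force $f\mid\vec w\,\G_1$.

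\textbf{Reverse direction.} Here there is a genuine gap, and you flag it yourself. Regularity gives only that the reduced $l\times l$ minors of $[f\mathbf I_{l\times l}\mid\F]$ have greatest common divisor a unit; Lemma~\ref{Lin-Bose-conjecture} requires that they \emph{generate} $k[\z]$. For $n\ge 3$ these hypotheses are distinct, and you offer no argument that freeness of $\rho(\F):f$ upgrades one to the other --- ``what I expect'' is not a proof, and it is far from clear that such an upgrade is available. Without the ZLP factorization of $[f\mathbf I_{l\times l}\mid\F]$, the matrix $\mathbf T$ with $\det(\mathbf T)=f$ never materializes, and your conclusion $f\mid\det(\G_1)$ is unsupported. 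Since you have already established $\det(\G_1)\mid f$, the entire reverse implication hinges on this missing divisibility, and you will need a different mechanism to supply it.
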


\subsection{Problems}

 According to Lemma \ref{W-flp}, Wang proposed a necessary and sufficient condition to verify whether $\F$ has a FLP factorization w.r.t. $f$. After that, \cite{Guan2018} considered the case of multivariate polynomial matrices without full row rank. When $f$ satisfies a special property, they obtained a necessary condition that $\mathbf{F}$ has a factorization w.r.t. $f$, and designed an algorithm to compute all FLP factorizations of $\F$ if they exist. In this paper we will further consider the following two problems concerning FLP factorizations.

 \begin{problem}\label{main-problem-1}
  Let $\mathbf{F}\in k[\z]^{l\times m}$ with rank $r$, and $f$ be a divisor of $d_r(\mathbf{F})$, where $1\leq r < l$. Determine whether $\F$ has a FLP factorization w.r.t. $f$.
 \end{problem}

 \begin{problem}\label{main-problem-2}
  Let $\mathbf{F}\in k[\z]^{l\times m}$ with rank $r$, where $1\leq r < l$. Constructing an algorithm to compute all FLP factorizations of $\F$.
 \end{problem}

 Youla and Gnavi used an example to show that it is very difficult to judge whether a multivariate polynomial matrix is a FLP matrix. Hence, Problem \ref{main-problem-1} and Problem \ref{main-problem-2} may be very difficult in general. In this paper, we will give partial solutions to the above two problems.

\section{Main Results}\label{sec_MR}

 Let $\mathbf{F}\in k[\z]^{l\times m}$ with rank $r$, and $f$ be a divisor of $d_r(\F)$, where $1\leq r < l$. We use the following lemma to illustrate that all the $r\times r$ column reduced minors of $\F$ play an important role in a factorization of $\F$ w.r.t. $f$.

 \begin{lemma}\label{comple-lemma}
  Let $\mathbf{F}\in k[\z]^{l\times m}$ with rank $r$, $f$ be a divisor of $d_r(\F)$, and $c_1,\ldots, c_\xi$ be all the $r\times r$ column reduced minors of $\F$, where $1\leq r < l$. If there exist $\mathbf{G}_1\in k[\z]^{l\times r}$ and $\mathbf{F}_1\in k[\z]^{r\times m}$ such that $\mathbf{F} = \mathbf{G}_1\mathbf{F}_1$ with $d_r(\G_1)=f$, then $I_r(\G_1) = \langle fc_1,\ldots, fc_\xi \rangle$.
 \end{lemma}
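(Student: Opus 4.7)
The plan is to leverage the Cauchy--Binet-type identity that governs the $r\times r$ minors of a product $\G_1\F_1$ in which $\G_1$ is $l\times r$ and $\F_1$ is $r\times m$. For every row-index set $I\subseteq\{1,\ldots,l\}$ with $|I|=r$ and every column-index set $J\subseteq\{1,\ldots,m\}$ with $|J|=r$, the $r\times r$ submatrix $\F_{I,J}$ factors as the product of the $r\times r$ row block of $\G_1$ on rows $I$ and the $r\times r$ column block of $\F_1$ on columns $J$. Taking determinants yields
\[
\det(\F_{I,J}) \;=\; g_I\cdot h_J,
\]
where $\{g_I:|I|=r\}$ is the full list of $r\times r$ minors of $\G_1$ (so $I_r(\G_1)=\langle g_I:|I|=r\rangle$) and $\{h_J:|J|=r\}$ is the full list of $r\times r$ minors of $\F_1$. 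By hypothesis $\gcd_I(g_I)=d_r(\G_1)=f$.

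Next I would pick a column-index set $J_0$ of size $r$ such that $\bar{\F}:=\F_{\cdot,J_0}$ has full column rank; such $J_0$ exists because ${\rm rank}(\F)=r$, and it forces $h_{J_0}\neq 0$. The $r\times r$ minors of $\bar{\F}$ are then exactly $\{g_I h_{J_0}:|I|=r\}$, and since $h_{J_0}$ factors out uniformly,
\[
d_r(\bar{\F}) \;=\; h_{J_0}\cdot \gcd_I(g_I) \;=\; f\cdot h_{J_0}.
\]
Dividing through, the $r\times r$ reduced minors of $\bar{\F}$ are the $g_I/f$ up to signs. By definition of the column reduced minors of $\F$---and by Lemma \ref{RM_relation}, which ensures that this list is independent of the choice of $J_0$---these are exactly $c_1,\ldots,c_\xi$. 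Hence the multiset $\{g_I\}_{|I|=r}$ coincides with $\{fc_j\}_{j=1}^{\xi}$ up to sign, giving the ideal equality $I_r(\G_1)=\langle fc_1,\ldots,fc_\xi\rangle$.

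The argument is essentially bookkeeping on top of the rank-one style factorization $\det(\F_{I,J})=g_I h_J$; once that identity is in hand, a single gcd computation on the full column rank submatrix $\bar{\F}$ closes the gap. The only subtlety is checking that $h_{J_0}$ is the \emph{only} extra factor pulled out when forming $d_r(\bar{\F})$---i.e., that $\gcd_I(g_I)$ contributes no additional polynomial---which is guaranteed by the hypothesis $d_r(\G_1)=f$. I do not expect any deeper obstacle.
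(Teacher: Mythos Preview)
Your argument is correct, but it proceeds along a different line from the paper's. The paper works dually: it introduces a full row rank matrix $\mathbf{A}\in k[\z]^{(l-r)\times l}$ with $\mathbf{A}\F=\mathbf{0}$, observes that $\mathbf{A}\bar{\F}=\mathbf{0}$ and $\mathbf{A}\G_1=\mathbf{0}$ (the latter because $\F_1$ has full row rank), and then applies Lemma~\ref{RM_relation} twice to conclude that both $\bar{\F}$ and $\G_1$ share the same reduced minors $c_1,\ldots,c_\xi$ with $\mathbf{A}$. Your proof instead stays on the ``primal'' side: you exploit the product structure $\det(\F_{I,J})=g_I h_J$ directly, fix one $J_0$ with $h_{J_0}\neq 0$, and read off that the reduced minors of $\bar{\F}=\F_{\cdot,J_0}$ are $g_I/f$. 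This is more elementary---it avoids constructing $\mathbf{A}$ and invokes Lemma~\ref{RM_relation} only implicitly through the well-definedness of the column reduced minors. The paper's route, on the other hand, makes transparent the invariance principle (any full column rank matrix annihilated by $\mathbf{A}$ has the same reduced minors), which is conceptually useful elsewhere. One small clarification: your ``up to signs'' is really ``up to a nonzero constant'' coming from the normalization of $d_r(\bar{\F})$; the signs in Lemma~\ref{RM_relation} only enter if the $c_i$ were defined via a different $J_0$, but either way the ideal $\langle fc_1,\ldots,fc_\xi\rangle$ is unaffected.
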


 \begin{proof}
  Since $\mathbf{F}$ is a matrix with rank $r$, there exists a full row rank matrix $\mathbf{A}\in k[\z]^{(l-r)\times l}$ such that $\mathbf{A}\mathbf{F} = \mathbf{0}_{(l-r)\times m}$. Let $\bar{\F} \in k[\z]^{l\times r}$ be an arbitrary full column rank submatrix of $\F$, then $\mathbf{A}\bar{\mathbf{F}} = \mathbf{0}_{(l-r)\times r}$. Based on Lemma \ref{RM_relation}, all the $r\times r$ reduced minors of $\mathbf{A}$ are $c_1,\ldots, c_\xi$. It follows from ${\rm rank}(\F) \leq {\rm min}\{{\rm rank}(\G_1),{\rm rank}(\F_1)\}$ that $\G_1$ is a full column rank matrix and $\F_1$ is a full row rank matrix. Then $\mathbf{A}\mathbf{G}_1\mathbf{F}_1 =\mathbf{0}_{(l-r)\times m}$ implies that $\mathbf{A}\mathbf{G}_1 =\mathbf{0}_{(l-r)\times r}$. Using Lemma \ref{RM_relation} again, all the $r\times r$ reduced minors of $\G_1$ are $c_1,\ldots, c_\xi$. Consequently, $I_r(\G_1) = \langle fc_1,\ldots, fc_\xi \rangle$ since $d_r(\G_1)=f$. \qed
 \end{proof}

 Now, we give the first main result in this paper.

 \begin{theorem}\label{LWX-theorem-1}
  Let $\mathbf{F}\in k[\z]^{l\times m}$ with rank $r$, $f$ be a divisor of $d_r(\F)$ and $c_1,\ldots, c_\xi$ be all the $r\times r$ column reduced minors of $\F$, where $1\leq r < l$. Let $d = d_r(\F)$ and $\mathcal{K} = \rho(\F)$, then the following are equivalent:
  \begin{enumerate}
    \item $\F$ has a factorization w.r.t. $f$;

    \item there exists $\F_1\in k[\z]^{r\times m}$ with full row rank such that $d_r(\F_1)= \frac{d}{f}$ and $\mathcal{K} \subseteq \rho(\F_1) \subseteq \mathcal{K}:\langle fc_1,\ldots, fc_\xi \rangle$.
  \end{enumerate}
 \end{theorem}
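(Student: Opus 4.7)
The plan is to prove $(1) \Leftrightarrow (2)$ as two separate implications, with the key structural input being that in any candidate factorization $\F=\G_1\F_1$ both factors must have rank $r$ (so $\G_1$ has full column rank and $\F_1$ has full row rank). The single computational workhorse on both sides is a special case of the Cauchy-Binet formula: since $\G_1$ has exactly $r$ columns and $\F_1$ exactly $r$ rows, every $r\times r$ minor of $\F$ selected from rows $I$ and columns $J$ factors as $\det(\G_1[I,:])\cdot\det(\F_1[:,J])$. Taking gcds in the unique factorization domain $k[\z]$ (where $\gcd_{i,j}\{a_ib_j\}=\gcd_i a_i\cdot\gcd_j b_j$) therefore gives the clean identity $d_r(\F)=d_r(\G_1)\cdot d_r(\F_1)$, which will translate the divisor condition $d_r(\G_1)=f$ into $d_r(\F_1)=d/f$ and vice versa.

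For the direction $(1)\Rightarrow(2)$, I would start from a factorization $\F=\G_1\F_1$ with $d_r(\G_1)=f$, obtain $d_r(\F_1)=d/f$ from Cauchy-Binet, and read the inclusion $\mathcal{K}\subseteq\rho(\F_1)$ directly from the fact that each row of $\F$ is the corresponding row of $\G_1$ applied to $\F_1$. The upper bound $\rho(\F_1)\subseteq\mathcal{K}:\langle fc_1,\ldots,fc_\xi\rangle$ is where the column reduced minors enter: Lemma \ref{comple-lemma} gives $I_r(\G_1)=\langle fc_1,\ldots,fc_\xi\rangle$, and Lemma \ref{minor-Guan} produces, for each $r\times r$ minor $g$ of $\G_1$, a matrix $\G_1'\in k[\z]^{r\times l}$ with $\G_1'\G_1=g\mathbf{I}_{r\times r}$. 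Multiplying $\F=\G_1\F_1$ on the left by $\G_1'$ yields $g\F_1=\G_1'\F$, so every row of $g\F_1$ lies in $\mathcal{K}$. Hence $I_r(\G_1)\cdot\rho(\F_1)\subseteq\mathcal{K}$, which rearranges into exactly the desired upper bound.

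For the converse $(2)\Rightarrow(1)$, given $\F_1$ full row rank with $d_r(\F_1)=d/f$ and $\mathcal{K}\subseteq\rho(\F_1)$, the inclusion $\mathcal{K}\subseteq\rho(\F_1)$ is equivalent to saying every row of $\F$ is a \emph{polynomial} combination of rows of $\F_1$, which assembles into a coefficient matrix $\G_1\in k[\z]^{l\times r}$ with $\F=\G_1\F_1$. Full row rank of $\F_1$ guarantees these coefficients are uniquely determined over the quotient field, so $\G_1$ is well-defined. Applying Cauchy-Binet once more, $d=d_r(\G_1)\cdot(d/f)$ forces $d_r(\G_1)=f$, producing the sought factorization w.r.t.\ $f$.

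The main obstacle is the upper-bound half of $(1)\Rightarrow(2)$: one must recognize that the right framework is the ideal-quotient of modules from Definition \ref{quotient-define}, and then chain Lemmas \ref{comple-lemma} and \ref{minor-Guan} to convert "$\G_1$ times a cofactor row equals $g$ times the identity" into a containment of rowspans. It is worth noting that the upper bound is not strictly needed to deduce $(1)$ from the weaker hypothesis $\mathcal{K}\subseteq\rho(\F_1)$ with $d_r(\F_1)=d/f$; its role in the theorem is to give a tight \emph{characterization} so that $\rho(\F_1)$ is trapped in an algorithmically accessible interval between $\mathcal{K}$ and $\mathcal{K}:\langle fc_1,\ldots,fc_\xi\rangle$, which will presumably be exploited in the algorithm of Section \ref{sec_AE}.
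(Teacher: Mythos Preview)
Your proposal is correct and follows essentially the same route as the paper: for $(1)\Rightarrow(2)$ you invoke Lemma~\ref{comple-lemma} to identify $I_r(\G_1)$ and Lemma~\ref{minor-Guan} to produce the left inverse $\G_1'$ giving $g\F_1=\G_1'\F$, exactly as the paper does, and for $(2)\Rightarrow(1)$ you use only the inclusion $\mathcal{K}\subseteq\rho(\F_1)$ together with the Cauchy--Binet identity $d_r(\F)=d_r(\G_1)d_r(\F_1)$, again matching the paper. Your additional remarks on why $\G_1$ is well-defined and on the role of the upper bound are sound commentary but do not change the underlying argument.
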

 
 \begin{proof}
  $1\rightarrow 2$. Suppose that $\F$ has a factorization w.r.t. $f$. Then there exist $\G_1\in k[\z]^{l\times r}$ and $\F_1\in k[\z]^{r\times m}$ such that $\F = \G_1\F_1$ with $d_r(\G_1) = f$. Clearly, $\mathcal{K} \subseteq \rho(\F_1)$. From $d_r(\F) = d_r(\G_1)d_r(\F_1)$ we have $d_r(\F_1)= \frac{d}{f}$. According to Lemma \ref{comple-lemma}, $I_r(\G_1) = \langle fc_1,\ldots, fc_\xi \rangle$. Let $g$ be any $r\times r$ minor of $\G_1$, then there exists $\G'\in k[\z]^{r\times l}$ such that $\G'\G_1 = g\mathbf{I}_{r\times r}$ by Lemma \ref{minor-Guan}. Multiplying both left sides of $\mathbf{F} = \mathbf{G}_1\mathbf{F}_1$ by $\G'$, we get $\G'\mathbf{F} = \G'\mathbf{G}_1\mathbf{F}_1 = g \mathbf{F}_1$. This implies that $g\cdot \rho(\F_1) \subseteq \mathcal{K}$. Noting that $g$ is an arbitrary $r\times r$ minor of $\G_1$, we obtain $\rho(\F_1)  \subseteq \mathcal{K} : I_r(\G_1) = \mathcal{K} : \langle fc_1,\ldots, fc_\xi \rangle$.

  $2\rightarrow 1$. Thanks to $\mathcal{K} \subseteq \rho(\F_1)$, there exists $\G_1\in k[\z]^{l\times r}$ such that $\F = \G_1\F_1$. It follows from $d_r(\F) = d_r(\G_1)d_r(\F_1)$ that $d_r(\G_1)= f$. Then, $\F$ has a factorization w.r.t. $f$.  
 \end{proof}

 Although Theorem \ref{LWX-theorem-1} gives a necessary and sufficient condition for $\F$ to have a factorization w.r.t. $f$, it is difficult to find a full row rank matrix $\F_1\in k[\z]^{r\times m}$ that satisfies $d_r(\F_1)= \frac{d}{f}$ and $\mathcal{K} \subseteq \rho(\F_1) \subseteq \mathcal{K}:\langle fc_1,\ldots, fc_\xi \rangle$. Next, we will further study the relationship between $\rho(\F)$ and $\rho(\F_1)$.

 \begin{theorem}\label{LWX-module}
  Let $\mathbf{F}\in k[\z]^{l\times m}$ with rank $r$, $f$ be a divisor of $d_r(\F)$ and $c_1,\ldots, c_\xi$ be all the $r\times r$ column reduced minors of $\F$, where $1\leq r < l$. Suppose there exist $\mathbf{G}_1\in k[\z]^{l\times r}$ and $\mathbf{F}_1\in k[\z]^{r\times m}$ such that $\mathbf{F} = \mathbf{G}_1\mathbf{F}_1$ with $d_r(\G_1)=f$. Let $d = d_r(\F)$, $\mathcal{K} = \rho(\F)$ and $\mathcal{K}_1 = \rho(\F_1)$, then the following are equivalent:
  \begin{enumerate}
    \item $(\mathcal{K}_1:\frac{d}{f})/\mathcal{K}_1$;

    \item $(\mathcal{K}:\langle dc_1,\ldots, dc_\xi \rangle)/\mathcal{K}_1$;

    \item ${\rm Torsion}(k[\z]^{1\times m}/\mathcal{K}_1)$.
  \end{enumerate}
 \end{theorem}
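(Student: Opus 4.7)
The plan is to pivot on Lemma \ref{LW-Torsion} to establish $(1) = (3)$ directly, and then sandwich the second module between the first and third so that all three coincide.

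First I would record that $\F = \G_1 \F_1$ with $\mathrm{rank}(\F) = r$ forces $\G_1 \in k[\z]^{l \times r}$ to have full column rank and $\F_1 \in k[\z]^{r \times m}$ to have full row rank (from $r = \mathrm{rank}(\F) \leq \min\{\mathrm{rank}(\G_1), \mathrm{rank}(\F_1)\}$), and that the determinantal identity $d_r(\F) = d_r(\G_1) d_r(\F_1)$ already invoked in the proof of Theorem \ref{LWX-theorem-1} gives $d_r(\F_1) = d/f$. Applying Lemma \ref{LW-Torsion} to $\F_1$ then yields
\[
  (\mathcal{K}_1 : (d/f))/\mathcal{K}_1 \;=\; \mathrm{Torsion}(k[\z]^{1 \times m}/\mathcal{K}_1),
\]
which is the equality $(1) = (3)$.

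The technical heart of the remaining argument is the auxiliary inclusion $fc_i \cdot \mathcal{K}_1 \subseteq \mathcal{K}$ for every $i = 1, \ldots, \xi$. Lemma \ref{comple-lemma} gives $I_r(\G_1) = \langle fc_1, \ldots, fc_\xi \rangle$. For any $r \times r$ minor $g$ of $\G_1$, Lemma \ref{minor-Guan} produces $\G' \in k[\z]^{r \times l}$ with $\G' \G_1 = g \mathbf{I}_{r \times r}$; left-multiplying $\F = \G_1 \F_1$ by $\G'$ gives $g \F_1 = \G' \F$, whence $g \cdot \rho(\F_1) \subseteq \rho(\F) = \mathcal{K}$. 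Since each $fc_i$ is a $k[\z]$-linear combination of the $r \times r$ minors of $\G_1$, $k[\z]$-linearity delivers $fc_i \cdot \mathcal{K}_1 \subseteq \mathcal{K}$. With this in hand I would prove $(1) \subseteq (2) \subseteq (3)$. For $(1) \subseteq (2)$: given $\vec{u} \in \mathcal{K}_1 : (d/f)$, we have $(d/f) \vec{u} \in \mathcal{K}_1$, so $dc_i \vec{u} = fc_i \cdot (d/f)\vec{u} \in fc_i \cdot \mathcal{K}_1 \subseteq \mathcal{K}$ for every $i$, hence $\vec{u} \in \mathcal{K} : \langle dc_1, \ldots, dc_\xi \rangle$; this also certifies $\mathcal{K}_1 \subseteq \mathcal{K}_1 : (d/f) \subseteq \mathcal{K} : \langle dc_1, \ldots, dc_\xi \rangle$, so the quotient in (2) is well-defined. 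For $(2) \subseteq (3)$: the $c_i$ are the reduced minors of a full column rank submatrix of $\F$, so $\gcd(c_1, \ldots, c_\xi) = 1$ and in particular some $c_j \neq 0$; then $dc_j \vec{u} \in \mathcal{K} \subseteq \mathcal{K}_1$ with $dc_j \neq 0$ shows that the class of $\vec{u}$ modulo $\mathcal{K}_1$ is annihilated by a nonzero polynomial and therefore lies in the torsion submodule. Combined with $(1) = (3)$, the chain $(1) \subseteq (2) \subseteq (3) = (1)$ forces equality throughout.

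I expect the only genuinely non-trivial step to be the auxiliary inclusion $fc_i \mathcal{K}_1 \subseteq \mathcal{K}$, which glues together Lemma \ref{comple-lemma} (identifying the $r$th determinantal ideal of $\G_1$ with $\langle fc_1, \ldots, fc_\xi \rangle$) and Lemma \ref{minor-Guan} (left cofactor-like inverses of $\G_1$ up to a minor). Once this is available, both inclusions in the sandwich are essentially one-line computations, and no direct comparison of the numerators $\mathcal{K}_1 : (d/f)$ and $\mathcal{K} : \langle dc_1, \ldots, dc_\xi \rangle$ is required.
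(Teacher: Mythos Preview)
Your proof is correct. The opening step $(1)=(3)$ via Lemma~\ref{LW-Torsion} and the inclusion $(1)\subseteq(2)$ via $fc_i\,\mathcal{K}_1\subseteq\mathcal{K}$ (equivalently $\mathcal{K}_1\subseteq\mathcal{K}:\langle fc_1,\ldots,fc_\xi\rangle$) match the paper exactly, including the use of Lemmas~\ref{comple-lemma} and~\ref{minor-Guan}.

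Where you diverge is in closing the chain. The paper proves the \emph{equality} of numerators $\mathcal{K}_1:\frac{d}{f}=\mathcal{K}:\langle dc_1,\ldots,dc_\xi\rangle$ directly, and for the reverse inclusion it invokes Lemma~\ref{Serre-LW}: given $\vec{u}$ with $dc_j\vec{u}=\vec{v}_j\G_1\F_1$, it right-multiplies by the matrices $\mathbf{V}_i$ satisfying $\F_1\mathbf{V}_i=\frac{d}{f}\varphi_i\mathbf{I}$ and uses $\gcd(\varphi_1,\ldots,\varphi_n)=1$ to extract a divisibility statement that eventually yields $\frac{d}{f}\vec{u}\in\mathcal{K}_1$. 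Your route bypasses this entirely: you observe that $\mathcal{K}\subseteq\mathcal{K}_1$ (trivial from $\F=\G_1\F_1$) and that some $dc_j\neq 0$, so any $\vec{u}\in\mathcal{K}:\langle dc_1,\ldots,dc_\xi\rangle$ has $dc_j\vec{u}\in\mathcal{K}\subseteq\mathcal{K}_1$, placing $\vec{u}+\mathcal{K}_1$ in the torsion submodule. This is strictly more elementary---it needs neither Lemma~\ref{Serre-LW} nor the divisibility manipulation---and the sandwich $(1)\subseteq(2)\subseteq(3)=(1)$ does the rest. What the paper's approach buys in exchange is the explicit identity $\mathcal{K}_1:\frac{d}{f}=\mathcal{K}:\langle dc_1,\ldots,dc_\xi\rangle$ at the level of submodules of $k[\z]^{1\times m}$ (not merely of quotients by $\mathcal{K}_1$), which is slightly stronger information, though not needed for the theorem as stated.
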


 \begin{proof}
  It follows from ${\rm rank}(\F) \leq {\rm min}\{{\rm rank}(\G_1),{\rm rank}(\F_1)\}$ that $\F_1$ is a full row rank matrix. Since $d_r(\F) = d_r(\G_1)d_r(\F_1)$, we have $d_r(\F_1) = \frac{d}{f}$. It is apparent from Lemma \ref{LW-Torsion} that
  \begin{equation}\label{LWX-module-equ-1}
   (\mathcal{K}_1:\frac{d}{f})/\mathcal{K}_1 = {\rm Torsion}(k[\z]^{1\times m}/\mathcal{K}_1).
  \end{equation}

  If the following equation
  \begin{equation}\label{LWX-module-equ-0}
   \mathcal{K}_1:\frac{d}{f} =\mathcal{K}:\langle dc_1,\ldots, dc_\xi \rangle
  \end{equation}
  holds, then $(\mathcal{K}_1:\frac{d}{f})/\mathcal{K}_1$ and $(\mathcal{K}:\langle dc_1,\ldots, dc_\xi \rangle)/\mathcal{K}_1$ are obviously equivalent.

  We first verify $\mathcal{K}_1:\frac{d}{f} \subseteq \mathcal{K}:\langle dc_1,\ldots, dc_\xi \rangle$. Proceeding as in the proof of $1\rightarrow 2$ in Theorem \ref{LWX-theorem-1}, we get
  \begin{equation}\label{LWX-module-equ-8}
   \mathcal{K}_1  \subseteq \mathcal{K} : \langle fc_1,\ldots, fc_\xi \rangle.
  \end{equation}
  Using Equation (\ref{quotient-module-2}), we can derive
  \begin{equation}\label{LWX-module-equ-2}
   \mathcal{K}_1:\frac{d}{f} \subseteq (\mathcal{K}:\langle fc_1,\ldots, fc_\xi \rangle):\frac{d}{f} = \mathcal{K}:\langle dc_1,\ldots, dc_\xi \rangle.
  \end{equation}
  Next we show $\mathcal{K}:\langle dc_1,\ldots, dc_\xi \rangle \subseteq \mathcal{K}_1:\frac{d}{f}$. For any vector $\vec{u}\in \mathcal{K}:\langle dc_1,\ldots, dc_\xi \rangle = \bigcap_{j=1}^{\xi}(\mathcal{K}:dc_j)$, there exists $\vec{v}_j\in k[\z]^{1\times l}$ such that
  \begin{equation}\label{LWX-module-equ-3}
   dc_j\vec{u} = \vec{v}_j\F = \vec{v}_j\G_1\F_1, ~j =1,\ldots,\xi.
  \end{equation}
  Using Lemma \ref{Serre-LW}, for each $i=1,\ldots,n$, there exists $\mathbf{V}_i\in k[\z]^{m\times r}$ such that
  \begin{equation}\label{LWX-module-equ-4}
   \F_1\mathbf{V}_i = \frac{d}{f}\varphi_i \mathbf{I}_{r\times r},
  \end{equation}
  where $\varphi_i$ is nonzero and independent of $z_i$. Combining Equation (\ref{LWX-module-equ-3}) and Equation (\ref{LWX-module-equ-4}), we see that
  \begin{equation}\label{LWX-module-equ-5}
      dc_j \vec{u} \mathbf{V}_i
     = \vec{v}_j\mathbf{G}_1\mathbf{F}_1\mathbf{V}_i
     = \vec{v}_j\mathbf{G}_1(\frac{d}{f}\varphi_i \mathbf{I}_{r\times r})
     = \frac{d}{f}\varphi_i\vec{v}_j\mathbf{G}_1.
  \end{equation}
  As ${\rm gcd}(\varphi_1,\ldots,\varphi_n) = 1$, we have $dc_j\mid \frac{d}{f}\vec{v}_j\mathbf{G}_1$. This implies that $\frac{\vec{v}_j\mathbf{G}_1}{fc_j}$ is a polynomial vector. Then, it follows from Equation (\ref{LWX-module-equ-3}) that
  \begin{equation}\label{LWX-module-equ-6}
   \frac{d}{f}\vec{u} = \frac{\vec{v}_j\mathbf{G}_1}{fc_j}\mathbf{F}_1, ~j =1,\ldots,\xi.
  \end{equation}
  Thus, $\vec{u} \in \mathcal{K}_1 : \frac{d}{f}$, and we infer that  $\mathcal{K}:\langle dc_1,\ldots, dc_\xi \rangle \subseteq \mathcal{K}_1:\frac{d}{f}$.

  Consequently, $(\mathcal{K}_1:\frac{d}{f})/\mathcal{K}_1 =(\mathcal{K}:\langle dc_1,\ldots, dc_\xi \rangle)/\mathcal{K}_1$.  
 \end{proof}

 In Theorem \ref{LWX-module}, we obtain $\mathcal{K}_1:\frac{d}{f} = \mathcal{K}:\langle dc_1,\ldots, dc_\xi \rangle$. Naturally, we consider under what conditions $\mathcal{K}_1$ and $\mathcal{K}:\langle fc_1,\ldots, fc_\xi \rangle$ are equal. Now, we propose the following conclusion.

 \begin{theorem}\label{LWX-Guan}
  Let $\mathbf{F}\in k[\z]^{l\times m}$ with rank $r$, $f$ be a divisor of $d_r(\F)$ and $c_1,\ldots, c_\xi$ be all the $r\times r$ column reduced minors of $\F$, where $1\leq r < l$. Suppose there exist $\mathbf{G}_1\in k[\z]^{l\times r}$ and $\mathbf{F}_1\in k[\z]^{r\times m}$ such that $\mathbf{F} = \mathbf{G}_1\mathbf{F}_1$ with $d_r(\G_1) = f$. Let $d = d_r(\F)$, $\mathcal{K} = \rho(\F)$ and $\mathcal{K}_1 = \rho(\F_1)$. If ${\rm gcd}(f,\frac{d}{f}) = 1$, then $\mathcal{K}_1 = \mathcal{K} :\langle fc_1,\ldots,fc_\xi\rangle$ and $\mathcal{K} :\langle fc_1,\ldots,fc_\xi\rangle$ is a free module of rank $r$.
 \end{theorem}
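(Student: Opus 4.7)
The plan is to establish $\mathcal{K}_1 = \mathcal{K} : \langle fc_1,\ldots,fc_\xi\rangle$ by proving two inclusions, and then to derive freeness from the full row rank of $\F_1$.

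The inclusion $\mathcal{K}_1 \subseteq \mathcal{K} : \langle fc_1,\ldots,fc_\xi\rangle$ does not need the coprimality hypothesis: by Lemma~\ref{comple-lemma} the ideal $I_r(\G_1)$ equals $\langle fc_1,\ldots,fc_\xi\rangle$, and for each $r\times r$ minor $g$ of $\G_1$, Lemma~\ref{minor-Guan} produces $\G'$ with $\G'\G_1 = g\mathbf{I}_r$; multiplying $\F = \G_1\F_1$ on the left by $\G'$ yields $g\F_1 = \G'\F$, so $g\,\rho(\F_1)\subseteq\mathcal{K}$, and intersecting over all such $g$ gives the inclusion. This is essentially the argument of step $1\to 2$ in Theorem~\ref{LWX-theorem-1}.

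For the reverse inclusion, I would take $\vec{u}\in\mathcal{K}:\langle fc_1,\ldots,fc_\xi\rangle$ and combine two compatible identities. From $fc_j\vec{u}\in\mathcal{K}$ there is $\vec{v}_j\in k[\z]^{1\times l}$ with $fc_j\vec{u} = \vec{v}_j\G_1\F_1$. Moreover, Theorem~\ref{LWX-module} identifies $\mathcal{K}_1:\frac{d}{f}$ with $\mathcal{K}:\langle dc_1,\ldots,dc_\xi\rangle$, and since $\mathcal{K}:\langle fc_1,\ldots,fc_\xi\rangle\subseteq\mathcal{K}:\langle dc_1,\ldots,dc_\xi\rangle$, there exists $\vec{w}_0\in k[\z]^{1\times r}$ with $\frac{d}{f}\vec{u} = \vec{w}_0\F_1$. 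Multiplying the first identity by $\frac{d}{f}$, the second by $fc_j$, and cancelling $\F_1$ on the right (valid because $\F_1$ has full row rank, so its rows are $k[\z]$-linearly independent) yields the scalar identity $\frac{d}{f}\vec{v}_j\G_1 = fc_j\vec{w}_0$ in $k[\z]^{1\times r}$.

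The main obstacle is promoting this identity, valid for every $j$, to the conclusion $\vec{u}\in\mathcal{K}_1$, and this is where both coprimality conditions intervene. The relation $\frac{d}{f}\mid fc_j\vec{w}_0$ together with $\gcd(f,\frac{d}{f})=1$ gives $\frac{d}{f}\mid c_j\vec{w}_0$ for every $j$. To pass to $\frac{d}{f}\mid\vec{w}_0$, I would argue entry by entry and prime by prime in the UFD $k[\z]$: because $c_1,\ldots,c_\xi$ are reduced minors of a full column rank submatrix we have $\gcd(c_1,\ldots,c_\xi)=1$, so for each irreducible divisor $p$ of $\frac{d}{f}$ some $c_{j_0}$ satisfies $p\nmid c_{j_0}$, which forces $v_p(\vec{w}_0)\ge v_p(\frac{d}{f})$; combining over all $p$ gives $\frac{d}{f}\mid\vec{w}_0$. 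Writing $\vec{w}_0 = \frac{d}{f}\vec{w}_0'$ and cancelling $\frac{d}{f}$ in $\frac{d}{f}\vec{u} = \frac{d}{f}\vec{w}_0'\F_1$ then gives $\vec{u} = \vec{w}_0'\F_1\in\mathcal{K}_1$. Freeness of the common module finally comes essentially for free: $\F_1$ has full row rank, so its rows are linearly independent over $k(\z)$, hence over $k[\z]$, making $\mathcal{K}_1=\rho(\F_1)$ a free $k[\z]$-module of rank $r$.
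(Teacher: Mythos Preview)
The paper does not actually give a proof of this theorem; it remarks that the argument is essentially that of Theorem~3.11 in \cite{Guan2018}, with the only change being the explicit identification of $I_r(\G_1)$ via the column reduced minors, and then omits the details.

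Your argument is correct. The forward inclusion is exactly the $1\to 2$ computation in Theorem~\ref{LWX-theorem-1}. For the reverse inclusion you take a mild shortcut relative to what the paper's reference presumably does: rather than re-running the Lemma~\ref{Serre-LW} argument directly (constructing right inverses $\mathbf{V}_i$ of $\F_1$ with $\F_1\mathbf{V}_i=\frac{d}{f}\varphi_i\mathbf{I}$ and exploiting $\gcd(\varphi_1,\ldots,\varphi_n)=1$), you invoke the identity $\mathcal{K}_1:\frac{d}{f}=\mathcal{K}:\langle dc_1,\ldots,dc_\xi\rangle$ already established inside the proof of Theorem~\ref{LWX-module} to get a representation $\frac{d}{f}\,\vec u=\vec w_0\F_1$. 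The subsequent cancellation of $\F_1$ (legitimate by full row rank), the use of $\gcd(f,\frac{d}{f})=1$ to pass from $\frac{d}{f}\mid fc_j\vec w_0$ to $\frac{d}{f}\mid c_j\vec w_0$, and the prime-by-prime argument exploiting $\gcd(c_1,\ldots,c_\xi)=1$ to strip the $c_j$ are all sound. The freeness conclusion then follows, as you say, immediately from the full row rank of $\F_1$. Your route buys brevity by packaging the Serre--Liu--Wang step into Theorem~\ref{LWX-module}; the paper's (omitted) route would be more self-contained at the level of the lemmas.
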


 The above theorem is a generalization of Theorem 3.11 in \citep{Guan2018}. The proof of Theorem \ref{LWX-Guan} is basically the same as that of Theorem 3.11, except that we explicitly give a system of generators of $I_r(\G_1)$. Hence, the proof is omitted here. Evidently, the calculation amount of $\rho(\F_1) = \rho(\F) :\langle fb_1,\ldots,fb_\beta \rangle$ in Theorem 3.11 is much larger than that of $\rho(\F_1) = \rho(\F) :\langle fc_1,\ldots,fc_\xi\rangle$ in Theorem \ref{LWX-Guan}.

 Suppose ${\rm gcd}(f,\frac{d}{f}) = 1$. Let $\mathcal{K} :\langle fc_1,\ldots,fc_\xi\rangle$ be a free module of rank $r$, and a free basis of the module constitutes $\mathbf{F}_1\in k[\z]^{r\times m}$. Then, $\rho(\F_1) = \mathcal{K} :\langle fc_1,\ldots,fc_\xi\rangle$. Given $\mathcal{K} \subseteq \rho(\F_1)$, there exists $\mathbf{G}_1\in k[\z]^{l\times r}$ such that $\mathbf{F} = \mathbf{G}_1\mathbf{F}_1$ with $d_r(\G_1) = f'$, where $f'$ is a divisor of $d$. Notice that $f$ and $f'$ may be different. The condition that $\mathcal{K} :\langle fc_1,\ldots,fc_\xi\rangle$ is a free module of rank $r$ is only a necessary condition for the existence of a factorization of $\F$ w.r.t. $f$. In order to study the relationship between $f'$ and $f$, we first introduce a result in \citep{Liu2015Further}.

 \begin{lemma}\label{LW-constant}
  Let $\F\in k[\z]^{l\times m}$ be of full row rank, $d = d_l(\F)$ and $\mathcal{K} = \rho(\F)$. If there exists a divisor $f$ of $d$ such that $\mathcal{K} : f = \mathcal{K}$, then $f$ is a constant.
 \end{lemma}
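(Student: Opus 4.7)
The plan is to argue the contrapositive: assuming $f$ is non-constant, I would produce an element of $(\mathcal{K}:f)\setminus\mathcal{K}$, contradicting the hypothesis. The strategy is to reduce $\F$ modulo an irreducible factor of $f$ and lift a nontrivial row relation back to $k[\z]$, exploiting the full row rank hypothesis to keep the lift outside $\mathcal{K}$.

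More precisely, suppose $f$ is non-constant and pick an irreducible factor $p \in k[\z]$ of $f$; since $f\mid d = d_l(\F)$, we have $p\mid d$, so $p$ divides every $l\times l$ minor of $\F$. Because $k[\z]$ is a UFD and $p$ is irreducible, $k[\z]/\langle p\rangle$ is an integral domain, and over its field of fractions the rows of the reduction $\bar{\F}$ are linearly dependent (all $l\times l$ minors vanish). Clearing denominators produces $\vec{w}\in k[\z]^{1\times l}$ with $\vec{w}\not\equiv \vec{0}\pmod{p}$ and $\vec{w}\F \equiv \vec{0}\pmod{p}$. Thus $\vec{w}\F = p\vec{u}$ for some $\vec{u}\in k[\z]^{1\times m}$.

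The next step is to verify $\vec{u}\in(\mathcal{K}:p)\setminus\mathcal{K}$. Membership in $\mathcal{K}:p$ is immediate since $p\vec{u} = \vec{w}\F\in\mathcal{K}$. For $\vec{u}\notin\mathcal{K}$, if instead $\vec{u}=\vec{w}'\F$ for some $\vec{w}'$, then $\vec{w}\F = p\vec{w}'\F$, and the full row rank of $\F$ (which forces $\vec{v}\F = \vec{0}$ to imply $\vec{v}=\vec{0}$) gives $\vec{w}=p\vec{w}'$, contradicting $\vec{w}\not\equiv\vec{0}\pmod{p}$. Finally, since $p\mid f$, writing $f = ph$ yields $f\vec{u} = h(p\vec{u})\in\mathcal{K}$, so $\mathcal{K}:p\subseteq\mathcal{K}:f$; hence $\vec{u}\in(\mathcal{K}:f)\setminus\mathcal{K}$, contradicting $\mathcal{K}:f=\mathcal{K}$.

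I expect the only delicate point to be the passage from ``all maximal minors of $\bar{\F}$ vanish'' to the existence of a nonzero row-relation $\vec{w}\F\equiv\vec{0}\pmod{p}$; this is the standard rank-deficiency argument over the fraction field of $k[\z]/\langle p\rangle$ together with a denominator-clearing lift, but it is the one place the irreducibility of $p$ (hence the primality of $\langle p\rangle$) is essential. The rest of the argument is a direct manipulation of colon modules and uses only that $\F$ has full row rank and that $\mathcal{K}$ is a $k[\z]$-submodule.
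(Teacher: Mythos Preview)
The paper does not supply its own proof of this lemma; it is quoted as a result from \cite{Liu2015Further} and used as a black box. Your argument is therefore not being compared against anything in the paper, but it stands on its own as a correct, self-contained proof.

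Your approach is sound. The key steps all check out: (i) since $k[\z]$ is a UFD, an irreducible factor $p$ of $f$ is prime, so $k[\z]/\langle p\rangle$ is a domain; (ii) the vanishing of all $l\times l$ minors modulo $p$ forces the rows of $\bar\F$ to be dependent over the fraction field, and clearing denominators in that domain preserves nontriviality of the relation; (iii) full row rank of $\F$ over $k[\z]$ means $\vec{v}\F=\vec{0}$ implies $\vec{v}=\vec{0}$, which blocks $\vec{u}\in\mathcal{K}$; and (iv) the inclusion $\mathcal{K}:p\subseteq\mathcal{K}:f$ when $p\mid f$ is immediate. The only place you flagged as delicate---lifting the rank-deficiency relation from the fraction field back to $k[\z]$---is handled correctly, precisely because primality of $\langle p\rangle$ makes the quotient a domain, so multiplying a nonzero coefficient by a nonzero common denominator keeps it nonzero.
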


 Now, we can draw the following conclusion.

 \begin{proposition}\label{LWX-Guan-equivalent}
  Let $\mathbf{F}\in k[\z]^{l\times m}$ with rank $r$, and $c_1,\ldots, c_\xi$ be all the $r\times r$ column reduced minors of $\F$, where $1\leq r < l$. Let $\mathcal{K} = \rho(\F)$, $d = d_r(\F)$ be a square-free polynomial and $f$ be a divisor of $d$. Suppose $\mathcal{K}_1 = \mathcal{K}:\langle fc_1,\ldots, fc_\xi \rangle$ is a free module of rank $r$ and $\mathbf{F}_1\in k[\z]^{r\times m}$ is composed of a free basis of $\mathcal{K}_1$. Then, there is no a proper divisor $f'$ of $f$ such that $\mathbf{F} = \mathbf{G}_1\mathbf{F}_1$, where $\mathbf{G}_1\in k[\z]^{l\times r}$ with $d_r(\G_1) = f'$.
 \end{proposition}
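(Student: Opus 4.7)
The approach is proof by contradiction, leveraging Theorem~\ref{LWX-Guan} applied to the hypothetical ``smaller'' factorization together with Lemma~\ref{LW-constant}. Suppose, for contradiction, that a proper divisor $f'$ of $f$ and a matrix $\G_1\in k[\z]^{l\times r}$ exist with $\F=\G_1\F_1$ and $d_r(\G_1)=f'$. Write $f=f'h$; since $f'$ is a \emph{proper} divisor of $f$, the cofactor $h$ is non-constant. The goal is to show that $h$ is forced to be a constant, which will be the contradiction.

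First I would invoke Theorem~\ref{LWX-Guan} on the factorization $\F=\G_1\F_1$. Because $d$ is square-free and $f'\mid d$, we have $\gcd(f',d/f')=1$, so the hypothesis of the theorem is met and it yields $\rho(\F_1)=\mathcal{K}:\langle f'c_1,\ldots,f'c_\xi\rangle$. On the other hand, the assumption of the proposition gives $\rho(\F_1)=\mathcal{K}_1=\mathcal{K}:\langle fc_1,\ldots,fc_\xi\rangle$.

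Next I would reconcile these two descriptions of $\rho(\F_1)$. Since $\langle fc_1,\ldots,fc_\xi\rangle=\langle h\rangle\cdot\langle f'c_1,\ldots,f'c_\xi\rangle$, Equation~(\ref{quotient-module-2}) rewrites the second description as
\[
\mathcal{K}:\langle fc_1,\ldots,fc_\xi\rangle=\bigl(\mathcal{K}:\langle f'c_1,\ldots,f'c_\xi\rangle\bigr):h=\rho(\F_1):h,
\]
whence $\rho(\F_1):h=\rho(\F_1)$.

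Finally I would appeal to Lemma~\ref{LW-constant}. The matrix $\F_1$ is of full row rank (it has $r$ rows, and rank $r$ is forced by $\F=\G_1\F_1$ having rank $r$), and $d_r(\F_1)=d/f'=(d/f)\cdot h$, so $h$ divides $d_r(\F_1)$. Lemma~\ref{LW-constant} then forces $h$ to be a nonzero constant, contradicting the properness of $f'$. I expect the pivotal step to be the ideal-quotient identity that collapses $\mathcal{K}:\langle fc_1,\ldots,fc_\xi\rangle$ to $\rho(\F_1):h$; everything else is routine once one observes that the square-freeness of $d$ simultaneously supplies the coprimality required by Theorem~\ref{LWX-Guan} and the divisibility $h\mid d_r(\F_1)$ required by Lemma~\ref{LW-constant}.
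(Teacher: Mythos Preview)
Your proof is correct and follows essentially the same route as the paper's own argument: apply Theorem~\ref{LWX-Guan} to the hypothetical factorization with $d_r(\G_1)=f'$ (using square-freeness of $d$ to guarantee $\gcd(f',d/f')=1$), combine the two descriptions of $\rho(\F_1)$ via the quotient identity~(\ref{quotient-module-2}) to obtain $\rho(\F_1):\tfrac{f}{f'}=\rho(\F_1)$, and then invoke Lemma~\ref{LW-constant} to force $\tfrac{f}{f'}$ to be constant. The only cosmetic difference is that the paper first notes $\mathcal{K}\subseteq\mathcal{K}_1$ to record that \emph{some} factorization $\F=\G_1\F_1$ exists before assuming $d_r(\G_1)$ is a proper divisor of $f$, whereas you fold this directly into the contradiction hypothesis.
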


 \begin{proof}
  Note that $\mathcal{K} \subseteq \mathcal{K}_1$, there exists $\mathbf{G}_1\in k[\z]^{l\times r}$ such that $\mathbf{F} = \mathbf{G}_1\mathbf{F}_1$ with $d_r(\G_1) = f'$, where $f'$ is a divisor of $d$. Since $d$ is a square-free polynomial, ${\rm gcd}(f',\frac{d}{f'}) = 1$. According to Theorem \ref{LWX-Guan}, it follows that $\mathcal{K}_1 = \mathcal{K}:\langle f'c_1,\ldots, f'c_\xi \rangle$, i.e.,
  \begin{equation}\label{LWX-Guan-equivalent-equ-1}
   \mathcal{K}:\langle fc_1,\ldots, fc_\xi \rangle = \mathcal{K}:\langle f'c_1,\ldots, f'c_\xi \rangle.
  \end{equation}
  Assume that $f'$ is a proper divisor of $f$. It can easily be seen from Equation (\ref{LWX-Guan-equivalent-equ-1}) that
  \begin{equation}\label{LWX-Guan-equivalent-equ-2}
   \mathcal{K}_1 : \frac{f}{f'} = \mathcal{K}_1.
  \end{equation}
  Because $d_r(\F_1) = \frac{d}{f'}$, we have $\frac{f}{f'} \mid d_r(\F_1)$. Based on Lemma \ref{LW-constant}, $\frac{f}{f'}$ is a constant. This contradicts the fact that $f'$ is a proper divisor of $f$. 
 \end{proof}

 Before giving a new necessary and sufficient condition for the existence of a factorization of $\F$ w.r.t. $f$, we present the following result.

 \begin{lemma}\label{equivalent-zlp}
  Let $\mathbf{F}\in k[\z]^{l\times m}$ with rank $r$, and $c_1,\ldots, c_\xi$ be all the $r\times r$ column reduced minors of $\F$, where $1\leq r < l$. Then the following are equivalent:
  \begin{enumerate}
    \item there exist $\mathbf{U}\in k[\z]^{l\times r}$ and $\mathbf{F}_1\in k[\z]^{r\times m}$ such that $\mathbf{F} = \mathbf{U}\mathbf{F}_1$ with $\mathbf{U}$ being a ZRP matrix;

    \item $\langle c_1,\ldots, c_\xi \rangle =k[\z]$.
  \end{enumerate}
 \end{lemma}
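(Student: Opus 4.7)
The plan is to establish the two implications by combining three results already in hand: Lemma \ref{RM_relation}, which relates reduced minors of complementary factors; the ZRP form of Lemma \ref{Lin-Bose-conjecture}, obtained by transposing the Lin--Bose theorem; and the Quillen--Suslin theorem in Lemma \ref{QS-theorem}. The bridging observation is that, by definition, $c_1,\ldots,c_\xi$ are the $r\times r$ reduced minors of every full column rank submatrix of $\F$, and by Lemma \ref{RM_relation} they agree (up to sign) with the reduced minors of any matrix annihilating $\F$ on the left.

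For $1\Rightarrow 2$, I would mimic the proof of Lemma \ref{comple-lemma}. Choose a full row rank $\mathbf{A}\in k[\z]^{(l-r)\times l}$ with $\mathbf{A}\F=\mathbf{0}$ and any full column rank submatrix $\bar{\F}\in k[\z]^{l\times r}$ of $\F$. Applying Lemma \ref{RM_relation} to $\mathbf{A}\bar{\F}=\mathbf{0}$ identifies the $(l-r)\times(l-r)$ reduced minors of $\mathbf{A}$ with $\pm c_1,\ldots,\pm c_\xi$. From $\F=\mathbf{U}\F_1$ combined with ${\rm rank}(\F)=r$ we get that both $\mathbf{U}$ and $\F_1$ have rank $r$; since $\F_1$ has full row rank, $\mathbf{A}\mathbf{U}\F_1=\mathbf{0}$ forces $\mathbf{A}\mathbf{U}=\mathbf{0}$. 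A second application of Lemma \ref{RM_relation} then yields that the $r\times r$ reduced minors of $\mathbf{U}$ are $\pm c_1,\ldots,\pm c_\xi$. Because $\mathbf{U}$ is ZRP, the ideal generated by its $r\times r$ minors equals $k[\z]$, and as $d_r(\mathbf{U})$ is a nonzero constant this ideal coincides with $\langle c_1,\ldots,c_\xi\rangle$, giving (2).

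For $2\Rightarrow 1$, fix any full column rank submatrix $\bar{\F}\in k[\z]^{l\times r}$ of $\F$; by definition its $r\times r$ reduced minors are $c_1,\ldots,c_\xi$, which generate $k[\z]$ by hypothesis. The transpose of Lemma \ref{Lin-Bose-conjecture} then provides $\bar{\F}=\mathbf{U}\mathbf{G}$ with $\mathbf{U}\in k[\z]^{l\times r}$ being ZRP and $\mathbf{G}\in k[\z]^{r\times r}$, while the transpose of Lemma \ref{QS-theorem} supplies $\mathbf{V}\in k[\z]^{r\times l}$ with $\mathbf{V}\mathbf{U}=\mathbf{I}_{r\times r}$. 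Setting $\F_1:=\mathbf{V}\F\in k[\z]^{r\times m}$, the only remaining task, and the step I expect to be the main technical obstacle, is verifying $\mathbf{U}\F_1=\F$, i.e., that the idempotent $\mathbf{U}\mathbf{V}$ fixes every column of $\F$. The argument proceeds column by column: since ${\rm rank}(\F)={\rm rank}(\bar{\F})=r$ and ${\rm det}(\mathbf{G})\neq 0$, each column of $\F$ lies in the $k(\z)$-span of the columns of $\mathbf{U}$, so one may write $\F_{\cdot j}=\mathbf{U}\vec{w}_j$ with $\vec{w}_j\in k(\z)^{r\times 1}$; multiplying by $\mathbf{V}$ on the left forces $\vec{w}_j=\mathbf{V}\F_{\cdot j}\in k[\z]^{r\times 1}$, whence $\F_{\cdot j}=\mathbf{U}\mathbf{V}\F_{\cdot j}$ and assembling columns gives $\F=\mathbf{U}\F_1$. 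Thus Quillen--Suslin is precisely what upgrades the factorization of $\bar{\F}$ through $\mathbf{U}$, which is a priori only rational, to a polynomial factorization of the whole matrix $\F$.
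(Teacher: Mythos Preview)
Your proof is correct, and for $1\Rightarrow 2$ it coincides with the paper's argument (both essentially repeat the proof of Lemma~\ref{comple-lemma}).

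For $2\Rightarrow 1$ you take a genuinely different route. The paper works through the left annihilator: it picks a full row rank $\mathbf{H}\in k[\z]^{(l-r)\times l}$ with $\mathbf{H}\F=\mathbf{0}$, applies Lemma~\ref{Lin-Bose-conjecture} to $\mathbf{H}$ to obtain a ZLP matrix $\mathbf{H}_1$ with ${\rm Syz}(\mathbf{H})={\rm Syz}(\mathbf{H}_1)$, invokes Quillen--Suslin to conclude that ${\rm Syz}(\mathbf{H})$ is free of rank $r$, takes $\mathbf{U}$ to be a matrix of basis vectors, and then needs a separate contradiction argument (a second application of Lin--Bose plus a comparison of row modules) to verify that $\mathbf{U}$ is actually ZRP; the factorization $\F=\mathbf{U}\F_1$ then follows because the columns of $\F$ lie in ${\rm Syz}(\mathbf{H})$. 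You instead apply the transposed Lin--Bose theorem directly to a full column rank submatrix $\bar{\F}$, obtaining a ZRP $\mathbf{U}$ immediately, and use the left inverse $\mathbf{V}$ from Quillen--Suslin together with a clean rational-to-polynomial argument to extend the factorization from $\bar{\F}$ to all of $\F$. Your path is shorter and avoids both the syzygy detour and the auxiliary verification that the free basis is ZRP; the paper's route, on the other hand, makes explicit the module-theoretic statement that ${\rm Syz}(\mathbf{H})$ is free, which ties in with the free-module viewpoint used elsewhere in the paper (e.g.\ Proposition~\ref{free-module-check}).
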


 \begin{proof}
  $1\rightarrow 2$. Suppose there exist $\mathbf{U}\in k[\z]^{l\times r}$ and $\mathbf{F}_1\in k[\z]^{r\times m}$ such that $\mathbf{F} = \mathbf{U}\mathbf{F}_1$, where $\mathbf{U}$ is a ZRP matrix. Using Lemma \ref{comple-lemma}, $c_1,\ldots, c_\xi$ are all the $r\times r$ reduced minors of $\mathbf{U}$. Then, $\langle c_1,\ldots, c_\xi \rangle =k[\z]$ since $\mathbf{U}$ is a ZRP matrix.

  $2\rightarrow 1$. Because ${\rm rank}(\F)=r$, there exists a full row rank matrix $\mathbf{H}\in k[\z]^{(l-r)\times l}$ such that
  \begin{equation}\label{equivalent-zlp-equ-1}
   \mathbf{H}\F = \mathbf{0}_{(l-r)\times m}.
  \end{equation}
  According to Lemma \ref{RM_relation}, $c_1,\ldots, c_\xi$ are all the $(l-r)\times (l-r)$ reduced minors of $\mathbf{H}$. Assume that $\langle c_1,\ldots, c_\xi \rangle =k[\z]$. By Lemma \ref{Lin-Bose-conjecture}, $\mathbf{H}$ has a ZLP factorization
  \begin{equation}\label{equivalent-zlp-equ-2}
   \mathbf{H} = \mathbf{G}\mathbf{H}_1,
  \end{equation}
  where $\mathbf{G}\in k[\z]^{(l-r)\times (l-r)}$, and $\mathbf{H}_1\in k[\z]^{(l-r)\times l}$ is a ZLP matrix. Let $\vec{v}\in {\rm Syz}(\mathbf{H})$, then $\mathbf{H}\vec{v} = \mathbf{G}\mathbf{H}_1\vec{v} = \vec{0}$. Since $\mathbf{G}$ is a full column rank matrix, $\mathbf{H}_1\vec{v} = \vec{0}$. This implies that $\vec{v}\in {\rm Syz}(\mathbf{H}_1)$. Let $\vec{u}\in {\rm Syz}(\mathbf{H}_1)$, it is obvious that $\vec{u}\in {\rm Syz}(\mathbf{H})$. It follows that
  \begin{equation}\label{equivalent-zlp-equ-3}
   {\rm Syz}(\mathbf{H}) = {\rm Syz}(\mathbf{H}_1).
  \end{equation}
  Thus we conclude that ${\rm Syz}(\mathbf{H})$ is a free module of rank $r$ by the Quillen-Suslin theorem.

  Suppose that $\mathbf{U}\in k[\z]^{l\times r}$ is composed of a free basis of ${\rm Syz}(\mathbf{H})$. It follows from $\mathbf{H}\mathbf{U} = \mathbf{0}_{(l-r)\times r}$ that all the $r\times r$ reduced minors of $\mathbf{U}$ generate $k[\z]$. Using Lemma \ref{Lin-Bose-conjecture} again, there exist $\mathbf{U}_1\in k[\z]^{l\times r}$ and $\mathbf{G}_1\in k[\z]^{r\times r}$ such that
  \begin{equation}\label{equivalent-zlp-equ-4}
   \mathbf{U} = \mathbf{U}_1\mathbf{G}_1
  \end{equation}
  with $\mathbf{U}_1$ being a ZRP matrix. Since $\G_1$ is a full row rank matrix, from $\mathbf{H} \mathbf{U}_1\mathbf{G}_1 = \mathbf{0}_{(l-r)\times r}$ we have
  \begin{equation}\label{equivalent-zlp-equ-5}
   \mathbf{H} \mathbf{U}_1= \mathbf{0}_{(l-r)\times r}.
  \end{equation}
  This implies that
  \begin{equation}\label{equivalent-zlp-equ-6}
   \rho(\mathbf{U}_1^{\rm T}) \subseteq \rho(\mathbf{U}^{\rm T}).
  \end{equation}
  Using $d_r(\mathbf{U}) = d_r(\mathbf{U}_1){\rm det}(\G_1)$, we get $d_r(\mathbf{U}) = \delta{\rm det}(\G_1)$, where $\delta$ is a nonzero constant. If ${\rm det}(\G_1) \in k[\z] \setminus k$, then Equation (\ref{equivalent-zlp-equ-4}) implies that
  \begin{equation}\label{equivalent-zlp-equ-7}
   \rho(\mathbf{U}^{\rm T}) \subsetneq \rho(\mathbf{U}_1^{\rm T}).
  \end{equation}
  This leads to a contradiction. Thus, ${\rm det}(\G_1)$ is a nonzero constant. Consequently, we infer that $\mathbf{U}$ is a ZRP matrix.

  Equation (\ref{equivalent-zlp-equ-1}) implies that the columns of $\F$ belong to ${\rm Syz}(\mathbf{H})$, then there exists $\mathbf{F}_1\in k[\z]^{r\times m}$ such that
  \begin{equation}\label{equivalent-zlp-equ-8}
   \mathbf{F} = \mathbf{U}\mathbf{F}_1.
  \end{equation}
 \end{proof}

 Now, we give the second main result in this paper.

\vspace{4pt}
 \begin{theorem}\label{LWX-main-flp}
  Let $\mathbf{F}\in k[\z]^{l\times m}$ with rank $r$, and $c_1,\ldots, c_\xi$ be all the $r\times r$ column reduced minors of $\F$, where $1\leq r < l$. Let $\mathcal{K} = \rho(\F)$, $d = d_r(\F)$ and $f$ be a divisor of $d$ with ${\rm gcd}(f,\frac{d}{f}) = 1$. If $\langle c_1,\ldots, c_\xi \rangle =k[\z]$, then the following are equivalent:
  \begin{enumerate}
    \item $\F$ has a factorization w.r.t. $f$;

    \item $\mathcal{K} : f$ is a free module of rank $r$.
  \end{enumerate}
 \end{theorem}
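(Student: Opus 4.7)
The plan is to reduce to the full row rank case and then invoke Lemma \ref{W-flp}. First, since $\langle c_1,\ldots,c_\xi\rangle = k[\z]$, Lemma \ref{equivalent-zlp} produces a factorization $\F = \mathbf{U}\F_0$ with $\mathbf{U}\in k[\z]^{l\times r}$ a ZRP matrix and $\F_0\in k[\z]^{r\times m}$. By Lemma \ref{QS-theorem} there exists $\mathbf{V}\in k[\z]^{r\times l}$ with $\mathbf{V}\mathbf{U}=\mathbf{I}_{r\times r}$, so $\F_0=\mathbf{V}\F$; consequently $\rho(\F_0)=\rho(\F)=\mathcal{K}$, the matrix $\F_0$ has full row rank, and Cauchy--Binet together with $d_r(\mathbf{U})$ being a unit yields $d_r(\F_0)=d$ up to a nonzero constant.

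Next, I would show that $\F$ admits a factorization w.r.t.\ $f$ if and only if $\F_0$ does. If $\F_0=\mathbf{B}\F_1$ with $\mathbf{B}\in k[\z]^{r\times r}$ and $\det(\mathbf{B})=f$, then $\F=(\mathbf{U}\mathbf{B})\F_1$, and Cauchy--Binet gives $d_r(\mathbf{U}\mathbf{B})=d_r(\mathbf{U})\det(\mathbf{B})=f$ up to a constant. Conversely, if $\F=\G_1\F_1$ with $d_r(\G_1)=f$, then $\F_0=(\mathbf{V}\G_1)\F_1$ is a product of full-rank factors, and multiplicativity of $d_r$ forces $\det(\mathbf{V}\G_1)=f$ up to a constant.

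Third, I would verify that $f$ is regular w.r.t.\ $\F_0$. The $r\times r$ minors of $[f\mathbf{I}_{r\times r}\ \F_0]$ are, up to sign, of the form $f^s\cdot\mu$ where $0\leq s\leq r$ and $\mu$ is an $(r-s)\times(r-s)$ minor of some submatrix of $\F_0$, so
\[
 d_r\bigl([f\mathbf{I}_{r\times r}\ \F_0]\bigr)=\gcd\bigl(f^s\cdot d_{r-s}(\F_0):\ 0\leq s\leq r\bigr).
\]
Every term on the right is divisible by $f$, and the hypothesis $\gcd(f,d/f)=1$ gives $\gcd(d,f^r)=f$, so the overall gcd equals $f$ up to a constant. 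Hence $f$ is regular w.r.t.\ $\F_0$, and Lemma \ref{W-flp} yields that $\F_0$ factorizes w.r.t.\ $f$ iff $\rho(\F_0):f=\mathcal{K}:f$ is free of rank $r$; combined with the equivalence of the previous step, this proves the theorem.

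The main obstacle is the regularity check in the third step: this is the only place where $\gcd(f,d/f)=1$ is used in an essential way, and it is precisely what ensures that $\gcd(d,f^r)$ is exactly $f$ rather than some larger factor. The reduction to $\F_0$ and the appeal to Lemma \ref{W-flp} are otherwise largely formal, modulo careful Cauchy--Binet bookkeeping for the rectangular products $\mathbf{U}\mathbf{B}$ and $\mathbf{V}\G_1$.
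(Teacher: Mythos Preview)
Your proof is correct and follows essentially the same route as the paper: reduce to the full row rank matrix $\F_0$ via the ZRP factor $\mathbf{U}$ supplied by Lemma~\ref{equivalent-zlp}, identify $\rho(\F_0)=\mathcal{K}$ and $d_r(\F_0)=d$, and then invoke Lemma~\ref{W-flp} after checking that $f$ is regular w.r.t.\ $\F_0$. The only noteworthy difference is in the direction $1\Rightarrow 2$: the paper does not pass through $\F_0$ there but instead applies Theorem~\ref{LWX-Guan} directly to the given factorization $\F=\G_1\F_1$ to obtain $\rho(\F_1)=\mathcal{K}:\langle fc_1,\ldots,fc_\xi\rangle=\mathcal{K}:f$, whereas you argue via $\F_0=(\mathbf{V}\G_1)\F_1$ and multiplicativity of $d_r$. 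Your symmetric treatment is arguably cleaner, and your explicit verification of regularity (which the paper simply attributes to \cite{Mingsheng2007On} in a remark) is a nice self-contained addition.
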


 \begin{proof}
  $1\rightarrow 2$. Suppose that $\F$ has a factorization w.r.t. $f$. Then there exist $\mathbf{G}_1\in k[\z]^{l\times r}$ and $\mathbf{F}_1\in k[\z]^{r\times m}$ such that $\mathbf{F} = \mathbf{G}_1\mathbf{F}_1$ with $d_r(\G_1) = f$. According to Theorem \ref{LWX-Guan}, $\rho(\F_1) = \mathcal{K} : \langle fc_1,\ldots, fc_\xi \rangle$. It follows from $\langle c_1,\ldots, c_\xi \rangle =k[\z]$ that $\langle fc_1,\ldots, fc_\xi \rangle = \langle f \rangle$. Then, $\rho(\F_1) = \mathcal{K} : f$. As $\mathbf{F}_1$ is a full row rank matrix, $\mathcal{K} : f$ is a free module of rank $r$.

  $2\rightarrow 1$. Since $\langle c_1,\ldots, c_\xi \rangle =k[\z]$, by Lemma \ref{equivalent-zlp} we obtain
  \begin{equation}\label{LWX-main-flp-equ-1}
   \mathbf{F} = \mathbf{U}\mathbf{F}',
  \end{equation}
  where $\mathbf{U}\in k[\z]^{l\times r}$ is a ZRP matrix and $\mathbf{F}'\in k[\z]^{r\times m}$. Without loss of generality, we assume that $d_r(\mathbf{U}) =1$. Clearly, $\rho(\F) \subseteq \rho(\F')$. Based on the Quillen-Suslin theorem, there is a ZLP matrix $\mathbf{V} \in k[\z]^{r\times l}$ such that $\mathbf{V}\mathbf{U} = \mathbf{I}_{r\times r}$. Then, $\mathbf{F}' = \mathbf{V}\mathbf{F}$. This implies that $\rho(\F') \subseteq \rho(\F)$. Thus, $\rho(\F') = \mathcal{K}$, $d_r(\F') = d_r(\F)$ and $\rho(\F') : f$ is a free module of rank $r$. Since ${\rm gcd}(f,\frac{d}{f}) = 1$, $f$ is regular w.r.t. $\F'$. By Lemma \ref{W-flp}, there exist $\mathbf{G}'\in k[\z]^{r\times r}$ and $\mathbf{F}_1\in k[\z]^{r\times m}$ such that
  \begin{equation}\label{LWX-main-flp-equ-4}
   \F' = \G'\F_1
  \end{equation}
  with ${\rm det}(\G') = f$. By substituting Equation (\ref{LWX-main-flp-equ-4}) into Equation (\ref{LWX-main-flp-equ-1}), we get
  \begin{equation}\label{LWX-main-flp-equ-5}
   \F = (\mathbf{U}\G')\F_1.
  \end{equation}
  Let $\G_1 = \mathbf{U}\G'$, then $d_r(\G_1) = d_r(\mathbf{U}) {\rm det}(\G') = f$. Thus $\F$ has a factorization w.r.t. $f$. 
 \end{proof}

 \begin{remark}
 \cite{Mingsheng2007On} proved that $f$ is regular w.r.t. $\F'$ if ${\rm gcd}(f,\frac{d}{f}) = 1$.
 \end{remark}

 Let $\mathbf{F}\in k[\z]^{l\times m}$ with rank $r$ and $f$ be a divisor of $d_r(\F)$, where $1\leq r < l$. We define the following set:
 \[ M(f)=\{ h \in k[\z]: f \mid h \text{ and } h \mid d_r(\F) \}. \]

\vspace{4pt}

 Now, we give a partial solution to Problem \ref{main-problem-1}.

 \begin{theorem}\label{LWX-main-flp-2}
  Let $\mathbf{F}\in k[\z]^{l\times m}$ with rank $r$, and $c_1,\ldots, c_\xi$ be all the $r\times r$ column reduced minors of $\F$, where $1\leq r < l$. Let $\mathcal{K} = \rho(\F)$, $d = d_r(\F)$ and $f$ be a divisor of $d$. Suppose every $h\in M(f)$ satisfies ${\rm gcd}(h,\frac{d}{h}) = 1$ and $\langle c_1,\ldots, c_\xi \rangle =k[\z]$, then the following are equivalent:
  \begin{enumerate}
    \item $\F$ has a FLP factorization w.r.t. $f$;

    \item $\mathcal{K} : f$ is a free module of rank $r$, but $\mathcal{K} : h$ is not a free module of rank $r$ for every $h\in M(f)\setminus \{f \}$.
  \end{enumerate}
 \end{theorem}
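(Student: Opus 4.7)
The plan is to reduce the FLP question to a chain of freeness statements indexed by divisors $h \in M(f)$, applying Theorem \ref{LWX-main-flp} to each such $h$. The bridge between the two conditions is the following dictionary: given a factorization $\F = \G_1\F_1$ with $d_r(\G_1) = f$, a non-trivial further decomposition $\F_1 = \G''\F_1''$ with $\G''\in k[\z]^{r\times r}$ non-unimodular corresponds exactly to a factorization of $\F$ w.r.t. $h := f\det(\G'') \in M(f)\setminus\{f\}$, namely $\F = (\G_1\G'')\F_1''$. Thus FLP-ness of $\F_1$ is equivalent to the non-existence of any factorization of $\F$ w.r.t. $h$ for $h \in M(f)\setminus\{f\}$, which by Theorem \ref{LWX-main-flp} translates into the non-freeness condition in (2).

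For $1\rightarrow 2$: From a FLP factorization $\F = \G_1\F_1$ w.r.t. $f$, Theorem \ref{LWX-main-flp} immediately gives that $\mathcal{K}:f$ is free of rank $r$. For the second half, suppose for contradiction that $\mathcal{K}:h$ is free of rank $r$ for some $h \in M(f)\setminus\{f\}$. Since $\gcd(h,d/h)=1$ by the hypothesis on $M(f)$, Theorem \ref{LWX-main-flp} yields another factorization $\F = \G_1'\F_1'$ with $d_r(\G_1')=h$. Combining Theorem \ref{LWX-Guan} with $\langle c_1,\ldots,c_\xi\rangle = k[\z]$, I identify $\rho(\F_1) = \mathcal{K}:f$ and $\rho(\F_1') = \mathcal{K}:h$; the inclusion $\mathcal{K}:f \subseteq \mathcal{K}:h$ (because $f\mid h$) produces $\G''\in k[\z]^{r\times r}$ with $\F_1 = \G''\F_1'$. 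To upgrade $\G_1\G'' = \G_1'$ from the fraction field to $k[\z]$, I apply Lemma \ref{minor-Guan} to $(\F_1')^{\rm T}$ to obtain $\mathbf{V}\in k[\z]^{m\times r}$ with $\F_1'\mathbf{V} = g\mathbf{I}_{r\times r}$ for a nonzero $r\times r$ minor $g$ of $\F_1'$; right-multiplying both $\F = \G_1\G''\F_1'$ and $\F = \G_1'\F_1'$ by $\mathbf{V}$ cancels $g$ in the integral domain $k[\z]$. Then Binet–Cauchy gives $\det(\G'') = h/f$, which is a non-unit since $h\neq f$, so $\F_1 = \G''\F_1'$ witnesses that $\F_1$ is not FLP—a contradiction.

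For $2\rightarrow 1$: Since $f \in M(f)$ satisfies $\gcd(f,d/f)=1$ and $\mathcal{K}:f$ is free of rank $r$, Theorem \ref{LWX-main-flp} produces $\F = \G_1\F_1$ with $d_r(\G_1)=f$. Assume $\F_1$ is not FLP; then $\F_1 = \G''\F_1''$ for some non-unimodular $\G''\in k[\z]^{r\times r}$ and some full row rank $\F_1''\in k[\z]^{r\times m}$. Setting $h := f\det(\G'')$, one checks $f\mid h$ and $h\mid d$ (using $d_r(\F_1) = d/f = \det(\G'')\,d_r(\F_1'')$), while $h$ is not an associate of $f$, so $h \in M(f)\setminus\{f\}$. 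The factorization $\F = (\G_1\G'')\F_1''$ is a factorization of $\F$ w.r.t. $h$, so by Theorem \ref{LWX-main-flp} applied to $h$ (whose hypothesis $\gcd(h,d/h)=1$ is supplied by the assumption on $M(f)$), $\mathcal{K}:h$ must be free of rank $r$, contradicting (2). Hence $\F_1$ is FLP.

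The main technical hurdle is the passage from the module-theoretic inclusion $\rho(\F_1) \subseteq \rho(\F_1')$ to the matrix identity $\G_1\G'' = \G_1'$ over $k[\z]$, which is needed in order to read off $\det(\G'') = h/f$ via Binet–Cauchy; this is handled by invoking Lemma \ref{minor-Guan} on the transpose of $\F_1'$, exploiting full row rank. Once this hinge is in place, the rest of the argument is a bookkeeping game with determinantal divisors, and the equivalence falls out of Theorem \ref{LWX-main-flp} applied uniformly to every $h \in M(f)$.
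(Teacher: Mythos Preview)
Your proof is correct and follows precisely the route the paper indicates (the paper omits the argument, noting only that Theorem \ref{LWX-main-flp} reduces it to the full-row-rank analogue in \citep{Mingsheng2007On}). One minor simplification in the $1\to 2$ direction: once you have $\F_1 = \G''\F_1'$, the identity $\det(\G'') = h/f$ follows immediately from $d_r(\F_1) = \det(\G'')\,d_r(\F_1')$, i.e.\ $d/f = \det(\G'')\cdot d/h$, so the detour through Lemma \ref{minor-Guan} and the matrix identity $\G_1\G'' = \G_1'$ is unnecessary.
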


 \begin{remark}
  With the help of Theorem \ref{LWX-main-flp}, the proof of Theorem  \ref{LWX-main-flp-2} is similar to that of Theorem 3.2 in \citep{Mingsheng2007On}, and is omitted here.
 \end{remark}

 In the above theorem, we need to verify whether a submodule of $k[\z]^{1\times m}$ is a free module of rank $r$. The traditional method is to calculate the $r$-th Fitting ideal of the submodule. We refer to \citep{Cox2005Using,Eisenbud2013,Greuel2002A} for more details. Next, we will give a simpler verification method.

\vspace{8pt}

 \begin{proposition}\label{free-module-check}
  Let $\mathbf{F}\in k[\z]^{l\times m}$ with rank $r$, and $J \subset k[\z]$ be a nonzero ideal, where $1\leq r < l$. Suppose $\mathbf{F}_0\in k[\z]^{s\times m}$ is composed of a system of generators of $\rho(\F): J$, then the following are equivalent:
  \begin{enumerate}
    \item $\rho(\F): J$ is a free module of rank $r$;

    \item all the $r\times r$ column reduced minors of $\F_0$ generate $k[\z]$.
  \end{enumerate}
 \end{proposition}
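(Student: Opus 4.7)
The plan is to reduce Proposition~\ref{free-module-check} to Lemma~\ref{equivalent-zlp} applied to the matrix $\F_0$ itself. Two preliminary facts make the reduction go through. First, $\rho(\F):J$ has rank exactly $r$: the inclusion $\rho(\F)\subseteq \rho(\F):J$ yields the lower bound, and picking any nonzero $f\in J$, every $\vec{u}\in \rho(\F):J$ satisfies $f\vec{u}\in \rho(\F)$, so $\rho(\F):J\subseteq f^{-1}\rho(\F)$ inside $k(\z)^{1\times m}$, which still has rank $r$. Hence $\F_0$ has rank $r$. Second, a rank-$r$ submodule $N\subseteq k[\z]^{1\times m}$ is free of rank $r$ if and only if $N=\rho(\F_1)$ for some full row rank $\F_1\in k[\z]^{r\times m}$, since the rows of such $\F_1$ are $k[\z]$-linearly independent precisely when it has full row rank.

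For $2\Rightarrow 1$, I would apply Lemma~\ref{equivalent-zlp} to $\F_0$ to obtain a factorization $\F_0=\mathbf{U}\F_1$ with $\mathbf{U}\in k[\z]^{s\times r}$ a ZRP matrix and $\F_1\in k[\z]^{r\times m}$. By the Quillen--Suslin theorem (Lemma~\ref{QS-theorem}) there exists $\mathbf{V}\in k[\z]^{r\times s}$ with $\mathbf{V}\mathbf{U}=\mathbf{I}_{r\times r}$, so $\F_1=\mathbf{V}\F_0$ and therefore $\rho(\F_1)=\rho(\F_0)=\rho(\F):J$. Because this module has rank $r$, $\F_1$ has full row rank, and its rows constitute a free basis, so $\rho(\F):J$ is a free module of rank $r$.

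For $1\Rightarrow 2$, I would take a free basis of $\rho(\F):J$ and stack it as $\F_1\in k[\z]^{r\times m}$. Since $\rho(\F_0)=\rho(\F_1)$, there exist $\mathbf{U}\in k[\z]^{s\times r}$ and $\mathbf{W}\in k[\z]^{r\times s}$ with $\F_0=\mathbf{U}\F_1$ and $\F_1=\mathbf{W}\F_0$, whence $\F_1=\mathbf{W}\mathbf{U}\F_1$. Full row rank of $\F_1$ forces $\mathbf{W}\mathbf{U}=\mathbf{I}_{r\times r}$, so $\mathbf{U}$ admits a left inverse and is therefore ZRP. A second application of Lemma~\ref{equivalent-zlp}, in the reverse direction, then yields that the $r\times r$ column reduced minors of $\F_0$ generate $k[\z]$.

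There is no substantive obstacle, because the proposition is in essence a repackaging of Lemma~\ref{equivalent-zlp} through the equivalence ``free of rank $r$'' $\Leftrightarrow$ ``extractable via a ZRP left factor''. The only points requiring a little care are verifying that $\rho(\F):J$ actually has rank $r$ (so that Lemma~\ref{equivalent-zlp} is applicable to $\F_0$) and the cancellation $\mathbf{W}\mathbf{U}=\mathbf{I}_{r\times r}$ via full row rank of $\F_1$ in the $1\Rightarrow 2$ direction.
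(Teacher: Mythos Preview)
Your proposal is correct and follows essentially the same approach as the paper: both directions reduce to Lemma~\ref{equivalent-zlp} via the existence of matrices $\mathbf{U},\mathbf{W}$ with $\mathbf{W}\mathbf{U}=\mathbf{I}_{r\times r}$, and both arguments hinge on first establishing that $\rho(\F):J$ has rank exactly $r$. The only cosmetic difference is that the paper separates off the degenerate case $s=r$ (where Lemma~\ref{equivalent-zlp}, stated for $r<l$, does not literally apply, but both conditions of the proposition hold automatically).
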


 \begin{proof}
   It is evident that $\rho(\F): J = \rho(\F_0)$. According to Proposition 3.14 in \citep{Guan2018}, the rank of $\rho(\F): J$ is $r$. This implies that ${\rm rank}(\F_0) = r$ and $s\geq r$.

  $1\rightarrow 2$. Suppose that $\rho(\F): J$ is a free module of rank $r$. Let $\mathbf{F}_1\in k[\z]^{r\times m}$ be composed of a free basis of $\rho(\F): J$, then $\rho(\F_1) = \rho(\F_0)$. On the one hand, $\rho(\F_0) \subseteq \rho(\F_1)$ implies that there exists $\mathbf{G}_1\in k[\z]^{s\times r}$ such that $\F_0 = \G_1\F_1$. On the other hand, it follows from $\rho(\F_1) \subseteq \rho(\F_0)$ that there exists $\mathbf{G}_0\in k[\z]^{r\times s}$ such that $\F_1 = \G_0\F_0$. Combining the above two equations, we have $\F_1 = (\G_0\G_1)\F_1$. Because $\F_1$ is a full row rank matrix, we obtain $\mathbf{I}_{r\times r} = \G_0\G_1$. According to the Binet-Cauchy formula, all the $r\times r$ minors of $\G_1$ generate $k[\z]$. Therefore, $\G_1$ is a ZRP matrix. Based on Lemma \ref{equivalent-zlp}, all the $r\times r$ column reduced minors of $\F_0$ generate $k[\z]$.

  $2\rightarrow 1$. There are two cases. First, $s>r$. Using Lemma \ref{equivalent-zlp}, there exist $\mathbf{F}_1\in k[\z]^{r\times m}$ and a ZRP matrix $\mathbf{U}\in k[\z]^{s\times r}$ such that $\mathbf{F}_0 = \mathbf{U}\mathbf{F}_1$. It follows from the proof of $2\rightarrow 1$ in Theorem \ref{LWX-main-flp} that $\rho(\F_0) = \rho(\F_1)$. Since $\F_1$ is a full row rank matrix, $\rho(\F): J$ is a free module of rank $r$. Second, $s=r$. In this situation, $\F_0$ is a full row rank matrix. This implies that $\rho(\F):J$ is a free module of rank $r$. Obviously, all the $r\times r$ column reduced minors of $\F_0$ are only one polynomial which is the constant $1$, and generate $k[\z]$. In summary, $\rho(\F):J$ is a free module of rank $r$.  
 \end{proof}

\section{Algorithm and Examples}\label{sec_AE}

\subsection{Algorithm}

 Before solving Problem \ref{main-problem-2}, we make the following analysis on the main results obtained in section \ref{sec_MR}. We first construct a polynomial matrix set of $k[\z]^{l\times m}$ as follows:
 \[\mathcal{M}=\{\F\in k[\z]^{l\times m} : d_r(\F) \text{ is a square-free polynomial}\},\]
 where $r = {\rm rank}(\F)$. Let $\mathbf{F}\in \mathcal{M}$, $d = d_r(\F)$, $\mathcal{K} = \rho(\F)$, $f$ be an arbitrary divisor of $d$, and $c_1,\ldots, c_\xi$ be all the $r\times r$ column reduced minors of $\F$, where $1\leq r < l$. There are two cases as follows.

 First, $\langle c_1,\ldots, c_\xi \rangle = k[\z]$. According to Theorem \ref{LWX-main-flp}, $\F$ has a factorization w.r.t. $f$ if and only if $\mathcal{K}:f$ is a free module of rank $r$. Since $f$ is an arbitrary divisor of $d$, we can compute all matrix factorizations of $\F$. After that, we obtain all FLP factorizations of $\F$ by Theorem \ref{LWX-main-flp-2}.

 Second, $\langle c_1,\ldots, c_\xi \rangle \neq k[\z]$. We only get a necessary condition for the existence of a factorization of $\F$ w.r.t. $f$ in Theorem \ref{LWX-Guan}. Nevertheless, we can get all factorizations of $\mathbf{F}$. The specific process is as follows. Let $f_1,\ldots,f_s$ be all different divisors of $d$ and $\mathcal{K}_j = \mathcal{K}: \langle f_jc_1,\ldots, f_jc_\xi \rangle$, then we verify whether $\mathcal{K}_j$ is a free module of rank $r$, where $j=1,\ldots,s$. For each $j$, one of the following three cases holds:
 \begin{enumerate}
   \item $\mathcal{K}_j$ is not a free module of rank $r$, then $\F$ has no factorization w.r.t. $f_j$;

   \item $\mathcal{K}_j$ is a free module of rank $r$, and a free basis of $\mathcal{K}_j$ constitutes $\mathbf{F}_j\in k[\z]^{r\times m}$,
       \begin{itemize}
        \item[2.1] if $d_r(\F_j) = \frac{d}{f_j}$, then $\F$ has a factorization w.r.t. $f_j$;

        \item[2.2] if $d_r(\F_j) \neq \frac{d}{f_j}$, then $\F$ has a factorization w.r.t. $f_i$, where $f_i\nmid f_j$.
       \end{itemize}
 \end{enumerate}
 Let $\F = \G_{i_1}\F_{i_1} =\cdots = \G_{i_t}\F_{i_t}$ be all different factorizations of $\F$ and $\mathcal{K}_{i_j} = \rho(\F_{i_j})$, where $\G_{i_j}\in k[\z]^{l\times r}$, $\F_{i_j}\in k[\z]^{r\times m}$, $j=1,\ldots, t$ and $0\leq t \leq s$ ($t=0$ implies that $\F$ has no factorizations). For each $\mathcal{K}_{i_j}$, if there does not exist $j'$ such that $\mathcal{K}_{i_j} \subsetneq \mathcal{K}_{i_{j'}}$, then $\F = \G_{i_j}\F_{i_j}$ is a FLP factorization of $\F$. The reason is as follows. Assume that there exist $\G_0\in k[\z]^{r\times r}$ and $\F_0\in k[\z]^{r\times m}$ such that $\F_{i_j} = \G_0\F_0$. If ${\rm det}(\G_0) \in k[\z] \setminus k$, then $\mathcal{K}_{i_j} \subsetneq \rho(\F_0)$. It can be seen that $\F = (\G_{i_j}\G_0)\F_0$ is a factorization of $\F$ and it is different from $\F = \G_{i_j}\F_{i_j}$. This contradicts the fact that there exists no $j'$ such that $\mathcal{K}_{i_j} \subsetneq \mathcal{K}_{i_{j'}}$. Then, ${\rm det}(\G_0)$ is a nonzero constant and $\F_{i_j}$ is a FLP matrix.

\vspace{4pt}
 According to the above analysis, we now give a partial solution to Problem \ref{main-problem-2}. We construct the following algorithm to compute all FLP factorizations for $\mathbf{F}\in \mathcal{M}$.

\vspace{6pt}
 Before proceeding further, let us remark on Algorithm \ref{FLP_Algorithm}.

 \begin{enumerate}
   \item[(1)] In step 14 and step 26, we need to compute free bases of free submodules in $k[\z]^{1\times m}$. \cite{Fabianska2007Applications} first designed a Maple package, which is called QUILLENSUSLIN, to implement the Quillen-Suslin theorem. At the same time, they implemented an algorithm for computing free bases of free submodules in this package. Based on this fact, Algorithm \ref{FLP_Algorithm} is implemented on Maple. For interested readers, more examples can be generated by the codes at: \url{http://www.mmrc.iss.ac.cn/~dwang/software.html}.

   \item[(2)] In step 8 and step 20, we need to compute a system of generators of $\mathcal{K}:J$, where $\mathcal{K}\subset k[\z]^{1\times m}$ and $J$ is a nonzero ideal. \cite{Mingsheng2005On} proposed an algorithm to compute $\mathcal{K}:J$, and we have implemented this algorithm on Maple.

   \item[(3)] In step 9 and step 21, if $\F_i'$ is a full row rank matrix, then $\rho(\F_i')$ is a free module of rank $r$ and we do not need to compute a reduced \gr basis of all the $r\times r$ column reduced minors of $\F_i'$; otherwise, we need to use Proposition \ref{free-module-check} to determine whether $\mathcal{K}:J$ is a free module of rank $r$.

   \item[(4)] In step 20, $\rho(\F):(f_i\mathcal{G}) = \rho(\F):\langle f_ic_1,\ldots,f_ic_\xi \rangle$ since $\mathcal{G}$ is a reduced \gr basis of $\langle c_1,\ldots,c_\xi\rangle$. This can help us reduce some calculations.

   \item[(5)] In step 15 and step 27, we need to compute $\G_i\in k[\z]^{l\times r}$ such that $\F = \G_i\F_i$. \cite{Lu2020On} designed a Maple package, which is called poly-matrix-equation, for solving multivariate polynomial matrix Diophantine equations. We use this package to compute $\G_i$.

   \item[(6)] In step 15, Theorem \ref{LWX-main-flp} can guarantee that $d_r(\G_i) = f_i$. In step 27, we can not ensure that $d_r(\G_i) = f_i$. Proposition \ref{LWX-Guan-equivalent} only tell us that there is no a proper divisor $f_i'$ of $f_i$ such that $d_r(\G_i) = f_i'$. Hence, we need to compute $d_r(\G_i)$.

   \item[(7)] In step 25 and step 29, we can use \gr bases to verify the inclusion relationship of two submodules of $k[\z]^{1\times m}$.

   \item[(8)] In step 17, the element $(\F_i',f_i)$ is also deleted since $f_i$ divides itself. Similarly, the element $(\F_i',f_i)$ in step 29 is also deleted since $\rho(\F_i') \subseteq \rho(\F_i')$.

   \item[(9)] In fact, we can obtain all factorizations of $\F$ by making appropriate modifications to Algorithm \ref{FLP_Algorithm}.
 \end{enumerate}
 
  \vskip 12 pt

\begin{algorithm}[H]
 \DontPrintSemicolon
 \SetAlgoSkip{}
 \LinesNumbered
 \SetKwInOut{Input}{Input}
 \SetKwInOut{Output}{Output}

 \Input{$\mathbf{F}\in \mathcal{M}$, the rank $r$ of $\F$ and $d_r(\F)$.}

 \Output{all FLP factorizations of $\mathbf{F}$.}

 \Begin{

  $P: = \emptyset$ and $W: = \emptyset$;

  compute all different divisors $f_1,\ldots,f_s$ of $d_r(\F)$;

  compute all the $r\times r$ column reduced minors $c_1,\ldots,c_\xi$ of $\F$;

  compute a reduced \gr basis $\mathcal{G}$ of $\langle c_1,\ldots,c_\xi\rangle$;

  \If{$\mathcal{G} = \{1\}$}
  {
    \For{$i$ from $1$ to $s$}
    {
      compute a system of generators of $\rho(\F):f_i$, and use all the elements in the system to constitute a matrix $\F_i'\in k[\z]^{s_i\times m}$;

      \If{the reduced \gr basis of all the $r\times r$ column reduced minors of $\F_i'$ is $\{1\}$}
      {
        $P := P \cup \{(\F_i',f_i)\}$;
      }
    }
    \While{$P \neq \emptyset$}
    {
      select any element $(\F_i',f_i)$ from $P$;

      \If{there is no other elements $(\F_j',f_j)\in P$ such that $f_i \mid f_j$}
      {
        compute a free basis of $\rho(\F_i')$, and use all the elements in the basis to constitute a matrix $\F_i\in k[\z]^{r\times m}$;

        compute a matrix $\G_i\in k[\z]^{l\times r}$ such that $\F = \G_i\F_i$;

        $W: = W \cup \{(\G_i,\F_i,f_i)\}$;
      }

      delete all elements $(\F_t',f_t)$ that satisfy $f_t\mid f_i$ from $P$;
    }
  }
  \Else
  {
    \For{$i$ from $1$ to $s$}
    {
      compute a system of generators of $\rho(\F):(f_i\mathcal{G})$, and use all the elements in the system to constitute a matrix $\F_i'\in k[\z]^{s_i\times m}$;

      \If{the reduced \gr basis of all the $r\times r$ column reduced minors of $\F_i'$ is $\{1\}$}
      {
        $P := P \cup \{(\F_i',f_i)\}$;
      }
    }

    \While{$P \neq \emptyset$}
    {
      select any element $(\F_i',f_i)$ from $P$;

      \If{there is no other elements $(\F_j',f_j)\in P$ such that $\rho(\F_i') \subsetneq \rho(\F_j')$}
      {
        compute a free basis of $\rho(\F_i')$, and use all the elements in the basis to constitute $\F_i\in k[\z]^{r\times m}$;

        compute a matrix $\G_i\in k[\z]^{l\times r}$ such that $\F = \G_i\F_i$ with $d_r(\G_i) = f_i'$;

        $W: = W \cup \{(\G_i,\F_i,f_i')\}$;
      }

      delete all elements $(\F_t',f_t)$ that satisfy $\rho(\F_t') \subseteq \rho(\F_i')$ from $P$;
    }
  }

  {\bf return} $W$.
 }
 \caption{FLP factorization algorithm}
 \label{FLP_Algorithm}
 \end{algorithm}

\subsection{Examples}

 We first use the example in \citep{Guan2018} to illustrate the calculation process of Algorithm \ref{FLP_Algorithm}.

 \begin{example}\label{example-1}
  {\rm Let
  \[\mathbf{F} =
  \begin{bmatrix}
       z_1z_2-z_2   & 0  & z_3+1    \\
       0  & z_1z_2-z_2  & z_1^2-2z_1+1 \\
       z_1^2z_2-z_1z_2    &  z_1z_2^2-z_2^2  & z_1^2z_2-2z_1z_2+z_1z_3+z_1+z_2
   \end{bmatrix}\]
  be a multivariate polynomial matrix in $\mathbb{C}[z_1,z_2,z_3]^{3\times 3}$, where $z_1>z_2>z_3$ and $\mathbb{C}$ is the complex field.

  It is easy to compute that the rank of $\F$ is $2$, and $d_2(\F)=(z_1-1)z_2$. Since $d_2(\F)$ is a square-free polynomial, $\F\in \mathcal{M}$. Then, we can use Algorithm \ref{FLP_Algorithm} to compute all FLP factorizations of $\F$. The input of Algorithm \ref{FLP_Algorithm} are $\mathbf{F}$, $r=2$ and $d_2(\F)=(z_1-1)z_2$.

  Let $P = \emptyset$ and $W = \emptyset$. All different divisors of $d_2(\F)$ are: $f_1 = 1$, $f_2 = z_1-1$, $f_3 = z_2$ and $f_4 = (z_1-1)z_2$. All the $2\times 2$ column reduced minors of $\F$ are: $c_1 = 1$, $c_2 = z_2$ and $c_3 = -z_1$. The reduced \gr basis of $\langle c_1,c_2,c_3 \rangle$ w.r.t. the degree reverse lexicographic order is $\mathcal{G} = \{1\}$. Now, we use the steps from 7 to 17 to compute all FLP factorizations of $\F$.

  (1) When $i=1$, we first compute a system of generators of $\rho(\F):f_1$ and the system is $\{[z_1z_2-z_2, ~ 0, ~ z_3+1],~[0, ~ z_1z_2-z_2, ~ z_1^2-2z_1+1]\}$. Let
  \[\mathbf{F}_1' =
  \begin{bmatrix}
   z_1z_2-z_2   &      0     &   z_3+1       \\
   0 & z_1z_2-z_2 & z_1^2-2z_1+1
   \end{bmatrix}.\]
  Since $\rho(\F_1') = \rho(\F):f_1$ and $\F_1'$ is a full row rank matrix, $\rho(\F):f_1$ is a free module of rank $2$.

  (2) When $i=2$, a system of generators of $\rho(\F):f_2$ is $\{[0, ~ z_2, ~ z_1-1],~[z_1z_2-z_2, ~ 0, ~ z_3+1]\}$. Let
  \[\mathbf{F}_2' =
  \begin{bmatrix}
   0   &      z_2     &   z_1-1       \\
   z_1z_2-z_2 & 0 & z_3+1
   \end{bmatrix}.\]
  Since $\rho(\F_2') = \rho(\F):f_2$ and $\F_2'$ is a full row rank matrix, $\rho(\F):f_2$ is a free module of rank $2$.

  (3) When $i=3$, a system of generators of $\rho(\F):f_3$ is $\{[z_1z_2-z_2, ~ 0, ~ z_3+1],~[0, ~ z_1z_2-z_2, ~ z_1^2-2z_1+1], ~[z_1^3-3z_1^2+3z_1-1, ~ -z_1z_3-z_1+z_3+1, ~ 0]\}$. Let
  \[\mathbf{F}_3' =
  \begin{bmatrix}
   z_1z_2-z_2   &      0     &   z_3+1       \\
   0 & z_1z_2-z_2 & z_1^2-2z_1+1  \\
   z_1^3-3z_1^2+3z_1-1  &   -z_1z_3-z_1+z_3+1  &  0
   \end{bmatrix}.\]
  All the $2\times 2$ column reduced minors of $\F_3'$ are $(z_1-1)^2, -z_2$, $z_3+1$. Since $\langle (z_1-1)^2, -z_2,z_3+1 \rangle \neq \mathbb{C}[z_1,z_2,z_3]$, $\rho(\F):f_3$ is not a free module of rank $2$.

  (4) When $i=4$, a system of generators of $\rho(\F):f_4$ is $\{[0, ~ z_2, ~ z_1-1],~[z_1z_2-z_2, ~ 0, ~ z_3+1], ~[z_1^2-2z_1+1, ~ -z_3-1, ~ 0]\}$. Let
  \[\mathbf{F}_4' =
  \begin{bmatrix}
   0   &      z_2     &   z_1-1       \\
   z_1z_2-z_2   &      0     &   z_3+1       \\
   z_1^2-2z_1+1 & -z_3-1  &  0
   \end{bmatrix}.\]
  All the $2\times 2$ column reduced minors of $\F_4'$ are $z_1-1, z_2,z_3+1$. Since $\langle z_1-1, z_2,z_3+1 \rangle \neq \mathbb{C}[z_1,z_2,z_3]$, $\rho(\F):f_4$ is not a free module of rank $2$.

  Then, $P=\{(\F_1',f_1),(\F_2',f_2)\}$. Since $f_2$ is a proper multiple of $f_1$, $\F$ has a FLP factorization w.r.t. $f_2$. Obviously, the rows of $\F_2'$ constitute a free basis of $\rho(\F):f_2$. Let $\F_2 = \F_2'$, we compute a polynomial matrix $\G_2\in \mathbb{C}[z_1,z_2,z_3]^{3\times 2}$ such that
  \[\F = \G_2\F_2 =
    \begin{bmatrix}
     0   &  1   \\
     z_1-1   &  0   \\
     z_1z_2- z_2   &  z_1
   \end{bmatrix}
   \begin{bmatrix}
    0           &   z_2   &    z_1-1           \\
    z_1z_2-z_2  &    0    &    z_3 +1
   \end{bmatrix},\]
   where $d_2(\G_2) = f_2$ and $\F_2$ is a FLP matrix. Then, $W = \{(\G_2,\F_2,f_2)\}$.}
 \end{example}

 \begin{remark}
  Since $\langle c_1,c_2,c_3 \rangle = \langle 1 \rangle$, we can use Theorem \ref{LWX-main-flp-2} to compute all FLP factorizations of $\F$. The above calculation process is simpler than that of Example 3.20 in \citep{Guan2018}. Obviously, Algorithm \ref{FLP_Algorithm} is more efficient than the algorithm proposed in \citep{Guan2018}.
 \end{remark}

 \begin{example}\label{example-2}
  {\rm Let
  \[\mathbf{F} =
  \begin{bmatrix}
    z_1z_2^2  &  z_1z_3^2  &  z_2^2z_3+z_3^3   \\
     z_1z_2   &      0     &      z_2z_3       \\
        0     &  z_1^2z_3  &     z_1z_3^2
   \end{bmatrix}\]
  be a multivariate polynomial matrix in $\mathbb{C}[z_1,z_2,z_3]^{3\times 3}$, where $z_1>z_2>z_3$ and $\mathbb{C}$ is the complex field.

  It is easy to compute that the rank of $\F$ is $2$, and $d_2(\F)=z_1z_2z_3$. Since $d_2(\F)$ is a square-free polynomial, $\F\in \mathcal{M}$. Then, we can use Algorithm \ref{FLP_Algorithm} to compute all FLP factorizations of $\F$. The input of Algorithm \ref{FLP_Algorithm} are $\mathbf{F}$, $r=2$ and $d_2(\F)=z_1z_2z_3$.

  Let $P = \emptyset$ and $W = \emptyset$. All different divisors of $d_2(\F)$ are: $f_1 = 1$, $f_2 = z_1$, $f_3 = z_2$, $f_4 = z_3$, $f_5 = z_1z_2$, $f_6 = z_1z_3$, $f_7 = z_2z_3$ and $f_8 = z_1z_2z_3$. All the $2\times 2$ column reduced minors of $\F$ are: $c_1 = z_1$, $c_2 = z_3$ and $c_3 = z_1z_2$. The reduced \gr basis of $\langle c_1,c_2,c_3 \rangle$ w.r.t. the degree reverse lexicographic order is $\mathcal{G} = \{z_1,z_3\}$. Now, we use the steps from 19 to 29 to compute all FLP factorizations of $\F$.

  Let $\mathcal{K}_i = \rho(\F): \langle f_ic_1,f_ic_2,f_ic_3 \rangle$, where $i=1,\ldots,8$. Since $\mathcal{G}$ is a \gr basis of $\langle c_1,c_2,c_3 \rangle$, for each $i$ we have $\mathcal{K}_i = \rho(\F): \langle f_ic_1,f_ic_2 \rangle = (\rho(\F):f_ic_1) \cap (\rho(\F):f_ic_2)$.

  (1) When $i=1$, the systems of generators of $\rho(\F):z_1$ and $\rho(\F):z_3$ are $\{[z_1z_2, ~ 0, ~ z_2z_3],~[0, ~ z_1z_3,$ $z_3^2],~[-z_2z_3^2, ~ z_2z_3^2, ~ 0]\}$ and $\{[z_1z_2, ~ 0, ~ z_2z_3],~[0, ~ z_1z_3, ~ z_3^2]$, $[0, ~ z_1^2, ~ z_1z_3]\}$, respectively. Then, a system of generators of $\mathcal{K}_1$ is
  \[\{[z_1z_2, ~ 0, ~ z_2z_3],~[0, ~ z_1z_3, ~ z_3^2],~[-z_1z_2z_3^2, ~ z_1z_2z_3^2, ~ 0]\}.\]
  Let
  \[\mathbf{F}_1' =
  \begin{bmatrix}
   z_1z_2   &      0     &      z_2z_3       \\
   0  &  z_1z_3  &  z_3^2   \\
   -z_1z_2z_3^2     &  z_1z_2z_3^2  &     0
   \end{bmatrix}.\]
  It is easy to compute that all the $2\times 2$ column reduced minors of $\F_1'$ are $1, z_2z_3,z_3^2$. Since $\langle 1, z_2z_3,z_3^2 \rangle = \mathbb{C}[z_1,z_2,z_3]$, $\mathcal{K}_1$ is a free module of rank $2$.

  (2) When $i=2$, the systems of generators of $\rho(\F):z_1^2$ and $\rho(\F):z_1z_3$ are $\{[z_1z_2, ~ 0, ~ z_2z_3],$ $[0, ~ z_1z_3, ~ z_3^2],~[z_2z_3, ~ -z_2z_3, ~ 0]\}$ and $\{[z_1z_2, ~ 0, ~ z_2z_3],~[0, ~ z_1, ~ z_3]$, $[z_2z_3, ~ -z_2z_3, ~ 0]\}$, respectively. Then, a system of generators of $\mathcal{K}_2$ is
  \[\{[z_1z_2, ~ 0, ~ z_2z_3],~[0, ~ z_1z_3, ~ z_3^2],~[z_2z_3, ~ -z_2z_3, ~ 0]\}.\]
  Let
  \[\mathbf{F}_2' =
  \begin{bmatrix}
   z_1z_2   &      0     &      z_2z_3       \\
   0  &  z_1z_3  &  z_3^2   \\
   z_2z_3     &  -z_2z_3  &     0
   \end{bmatrix}.\]
  It is easy to compute that all the $2\times 2$ column reduced minors of $\F_2'$ are $z_1,-z_2,-z_3$. Since $\langle z_1,-z_2,-z_3 \rangle \neq \mathbb{C}[z_1,z_2,z_3]$, $\mathcal{K}_2$ is not a free module of rank $2$.

 (3) According to the above same steps, we have that the systems of generators of $\mathcal{K}_3, \ldots, \mathcal{K}_8$ are $\{[z_1, ~ 0, ~ z_3],~[0, ~ z_1z_3, ~ z_3^2]\}$, $\{[0, ~ z_1, ~ z_3],~[z_1z_2, ~ 0, ~ z_2z_3]\}$, $\{[-z_3, ~ z_3, ~ 0],~[z_1, ~ 0, ~ z_3]\}$, $\{[0, ~ z_1, ~ z_3],$ $[z_2, ~ -z_2, ~ 0]\}$, $\{[z_1, ~ 0, ~ z_3],~[0, ~ z_1, ~ z_3]\}$ and $\{[0, ~ z_1, ~ z_3],~[-1, ~ 1, ~ 0]\}$, respectively. Let $\F_i'\in \mathbb{C}[z_1,z_2,z_3]^{2\times 3}$ be composed of the above system of generators of $\mathcal{K}_i$, where $i =3,\ldots,8$. For each $i$, it is easy to compute that ${\rm rank}(\F_i')=2$. This implies that $\F_i'$ is a full row rank matrix. Then, $\mathcal{K}_i = \rho(\F_i')$ is a free module of rank $2$. Then, we have
 \[P = \{(\F_1',f_1),(\F_3',f_3),\ldots,(\F_8',f_8)\}.\]

 (4) Since $\rho(\F_i') \subsetneq \rho(\F_8')$ for each $1\leq i \leq 7$ with $i \neq 2$, $\F$ has only one FLP factorization. Since
  \[\mathbf{F}_8' =
  \begin{bmatrix}
   0  &      z_1     &   z_3       \\
   -1  &      1      &    0
   \end{bmatrix}\]
  is a full row rank matrix, the rows of $\mathbf{F}_8'$ constitute a free basis of $\mathcal{K}_8 = \rho(\F_8')$. Let $\mathbf{F}_8 = \mathbf{F}_8'$, we compute a polynomial matrix $\G_8\in \mathbb{C}[z_1,z_2,z_3]^{3\times 2}$ such that
  \[\F = \G_8\F_8 =
    \begin{bmatrix}
     z_2^2+z_3^2   &  -z_1z_2^2   \\
      z_2   &  -z_1z_2   \\
      z_1z_3    &  0
   \end{bmatrix}
   \begin{bmatrix}
   0  &      z_1     &   z_3       \\
   -1  &      1      &    0
   \end{bmatrix},\]
   where $\F_8$ is a FLP matrix. It is easy to compute that $d_2(\G_8) = f_8$. Then, $W = \{(\G_8,\F_8,f_8)\}$.}
 \end{example}

\section{Concluding Remarks}\label{sec_conclusions}

 In this paper we have studied two FLP factorization problems for multivariate polynomial matrices without full row rank. As we all know, FLP factorizations are still open problems so far. In order to solve some special situations, we have introduced the concept of column reduced minors. Then, we have proved a theorem which provides a necessary and sufficient condition for a class of multivariate polynomial matrices without full row rank to have FLP factorizations. Moreover, we have given a simple method to verify whether a submodule of $k[\z]^{1\times m}$ is a free module by using column reduced minors of polynomial matrices. Compared with the traditional method, the new method is more efficient. Based on our results, we have also proposed an algorithm for FLP factorizations and have implemented it on the computer algebra system Maple. Two examples have been given to illustrate the  effectiveness of the algorithm.

 Let $\F\in k[\z]^{l\times m}$, every full column rank submatrix of $\F$ is a square matrix if ${\rm rank}(\F) = l$. In this case, all the $l\times l$ column reduced minors of $\F$ are only one polynomial which  is the constant $1$. Therefore, all the results in this paper are also valid for the case where $\F$ is a full row rank matrix.

 We can define the concept of row reduced minors, and all the results in this paper can be translated to similar results for FRP factorizations of multivariate polynomial matrices without full column rank. We hope the results provided in the paper will motivate further research in the area of factor prime factorizations.

\section*{Acknowledgments}

 This research was supported by the CAS Key Project QYZDJ-SSW-SYS022.

\bibliographystyle{elsarticle-harv}

\bibliography{FLP_Manuscript}

\end{document}